\newtheorem{thm}{Theorem}
\newtheorem{lem}{Lemma}
\theoremstyle{definition}
\newtheorem{assumption}{Assumption}
\newcommand{\ket}[1]{|#1\rangle}
\newcommand{\bra}[1]{\langle#1|}
\newcommand{\eqnref}[1]{Eq.~(\ref{#1})}
\newcommand{\Uf}{U_{\mathrm{f}}}
\newcommand{\U}{\mathrm{U}}
\newcommand{\WUf}{\widetilde{U}_{\mathrm{f}}}
\newcommand{\cH}{\mathcal{H}}
\newcommand{\neutralize}[1]{\expandafter\let\csname c@#1\endcsname\count@}
\newenvironment{thmbis}[1]
  {%
   \neutralize{thm}\phantomsection
   \begin{thm}}
  {\end{thm}}
\begin{document}
\title{Pre-thermal Phases of Matter Protected by Time-Translation Symmetry 
}

\author{Dominic V. Else}
\affiliation{Physics Department, University of California,  Santa Barbara, California 93106, USA}

\author{Bela Bauer}
\affiliation{Station Q, Microsoft Research, Santa Barbara, California 93106-6105, USA}

\author{Chetan Nayak}
\affiliation{Station Q, Microsoft Research, Santa Barbara, California 93106-6105, USA}
\affiliation{Physics Department, University of California,  Santa Barbara, California 93106, USA}

\begin{abstract}
    In a periodically driven (Floquet) system, there is the possibility for new
    phases of matter,
    not present in stationary systems, protected by discrete time-translation
    symmetry. This includes topological phases protected in part by time-translation symmetry, as well
    as phases distinguished by the spontaneous breaking of this symmetry, dubbed ``Floquet time crystals''.
	 We show that such phases of matter can exist in the pre-thermal regime of periodically-driven systems, which
	 exists generically for sufficiently large drive frequency, thereby eliminating the need for
    integrability or strong quenched disorder that limited previous constructions.
    We prove a theorem that states that such a pre-thermal regime persists until
    times that are nearly exponentially-long in the ratio of certain couplings
    to the drive frequency. By similar techniques, we can also construct stationary systems which spontaneously break \emph{continuous} time-translation symmetry.
    We argue furthermore that for driven systems coupled to a cold bath, the pre-thermal regime could potentially persist to infinite time.
\end{abstract}

\maketitle
\section{Introduction}

Much of condensed matter physics revolves around determining which distinct \emph{phases of matter}
can exist as equilibrium states of physical systems. Within a phase, the properties of the system
vary continuously as external parameters are varied, while different phases
are separated by phase transitions, at which the properties change abruptly.
An extremely rich set of observed phases can be
characterized by symmetry. The best known example is \emph{spontaneous symmetry-breaking},
as a result of which the equilibrium state of the system is less symmetrical than the Hamiltonian.
More recently, a set of uniquely quantum phases---\emph{symmetry-protected topological} (SPT) phases~\cite{Gu2009,Pollmann2010,Pollmann2012,Fidkowski2010,Chen2010,Chen2011,Schuch2011a,Fidkowski2011, Chen2011b,Chen2013,Levin2012,Vishwanath2013,Wang2014, Kapustin2014,Gu2014,Else2014,Burnell2014,Wang2015,Cheng2015},
including topological insulators~\cite{Hasan10,Qi11},
and \emph{symmetry-enriched topological} (SET) phases~\cite{Maciejko2010,Essin2013,Lu2013,Mesaros2013,Hung2013,Barkeshli2014,Cheng2015a}---has been discovered. These phases, while symmetric, manifest the symmetry in subtly
anomalous ways, and are distinct only as long as the symmetry is preserved. We can collectively refer to these three classes of phases
as \emph{symmetry-protected phases of matter}.

Thus far, the concept of symmetry-protected phases of matter has
not been as succesful in describing systems away from equilibrium. Recently, however, it was realized that
certain periodically-driven ``Floquet'' systems can exhibit distinct phases,
akin to those of equilibrium systems~\cite{Khemani15b}.
In this paper, we show that there is, in fact, a very general set of non-equilibrium conditions
under which such phases can arise, due to a remarkable phenomenon called "pre-thermalization".
In Floquet systems, pre-thermalization occurs when a time-dependent change of basis removes
all but a small residual time-dependence from the Hamiltonian, and thus allows the properties of the system
to be mapped approximately onto those of a system in thermal equilibrium.
The residual time-dependence is nearly exponentially-small
in a large parameter $\alpha$ of the original Hamiltonian of the system.
One can then talk about a ``pre-thermal regime'' in which the system reaches a thermal equilibrium state with
respect to the approximate effective time-independent Hamiltonian that results from neglecting the small residual time dependence.
In this regime, the system can exhibit phases and phase transitions analogous to those seen in thermal equilibrium, such as
symmetry-protected phases. Nevertheless, in the original non-rotating frame, the system remains
\emph{very far from thermal equilibrium}
with respect to the instantaneous Hamiltonian at any given time.
After the characteristic time $t_*$, which is nearly exponentially-long in
the large parameter $\alpha$, other physics (related the residual time-dependence) takes
over.

In this paper, we show that pre-thermal systems
can also exhibit phases of matter that cannot exist in thermal equilibrium.
These novel phases can also be understood as symmetry-protected phases
but of a variety that cannot occur in thermal equilibrium: these phases are protected by
\emph{discrete time-translation symmetry}. While these include topological phases protected
by time-translation symmetry~\cite{vonKeyserlingk16a,Else16a,Potter16,Roy16}, perhaps the most dramatic of these are
``time crystals'' that spontaneously break time-translation symmetry.
The idea of time crystals that spontaneously break \emph{continuous} time-translation symmetry was first proposed by Wilczek and Shapere \cite{Wilczek12,Shapere12},
but finding a satisfactory \emph{equilibrium}
model has proven difficult and some no-go theorems exist \cite{Li13,Bruno13a,Bruno13b,Bruno2013,Nozieres13,Volovik2013,Watanabe15}.
In this paper, we construct pre-thermal ``Floquet time crystals'', which spontaneously break
the \emph{discrete time-translation symmetry} of periodically-driven systems~\cite{Else2016b}
\footnote{For an alternative view of such systems that focuses on other symmetries of the discrete time-translation
operator, see Refs. \onlinecite{Khemani15b,vonKeyserlingk16b,vonKeyserlingk2016a}.}.
Floquet time crystals are the focus of
this paper, but as a by-product of our analysis, we also find pre-thermal -- i.e. non-equilibrium --
time crystals that spontaneously break \emph{continuous} time-translation
symmetry. We also construct SPT and SET phases protected by discrete time-translation
symmetry.

Periodically-driven systems have long been considered an unlikely place to find
interesting phases of matter and phase transitions
since generic driven closed systems will heat up to infinite
temperature \cite{DAlessio2014,Lazarides2014,Ponte2015a}.
It has been known that the heating problem can be avoided \cite{Abanin2014,Ponte15a,Ponte15b,Lazarides15,Iadecola15}
if the system is integrable or if the system has sufficiently strong quenched disorder that it undergoes
many-body localization (MBL) \onlinecite{Basko06a,Basko06b,Oganesyan07,Znidaric08,Pal10,Bardarson12,Bauer13,Serbyn13a,Serbyn13b,Huse14}.
However, integrability relies on fine-tuning, and MBL requires the system to be completely decoupled from the environment \cite{Levi2015,Fischer2015,Nandkishore2014,Gopalakrishnan2014,Johri2015,Nandkishore2015b,Li2015,Nandkishore2016,Hyatt2016}.
Furthermore, the disorder must be sufficiently strong,
which may be difficult to realize in an experiment but does not constitute fine-tuning.

The central result of this paper is therefore to show that pre-thermalization makes it possible for
non-equilibrium phases protected by time-translation symmetry
to occur in more generic non-equilibrium systems without the need for fine-tuning, strong disorder,
or complete decoupling from the environment.
Remarkably, these non-equilibrium phases and phase transitions, which have
have no direct analogues in thermal equilbrium,
have a mathematical formulation that is identical to that of equilibrium phases,
though with a different physical interpretation.
Since MBL is not a requirement, it is conceivable that pre-thermal
time-translation protected phases could survive the presence of coupling
to an environment. In fact, we will discuss a plausible scenario by which these phases can actually be \emph{stabilized} by coupling to a sufficiently cold thermal bath, such that the system remains in the pre-thermal regime even at infinite time.

The structure of the paper will be as follows. In Section
\ref{prethermalization_results}, we state our
main technical result. In Section \ref{sec_floquet_time_crystal}, we apply this to construct prethermal
Floquet time crystals which spontaneously break discrete time-translation
symmetry. In Section \ref{sec_continuous_time_crystal}, we show that a \emph{continuous} time-translation
symmetry can also also be spontaneously broken in the pre-thermal regime for a
system with a time-independent Hamiltonian. In Section \ref{sec_sxt}, we outline how
our methods can also be applied to construct SPT and SET phases protected by
time-translation symmetry. In Section \ref{sec_open}, we discuss what we expect to
happen for non-isolated systems coupled to a cold thermal bath.
Finally, we discuss implications and interpretations in Section \ref{sec_discussion}.

\section{Pre-Thermalization Results}
\label{prethermalization_results}

The simplest incarnation of pre-thermalization occurs in periodically-driven systems when the
driving frequency $\nu$ is much larger than all of the local energy scales of the instantaneous Hamiltonian
\cite{Abanin2015,Abanin2015a,Abanin2015b,Kuwahara2015,Mori2016}
(see also Refs.~\onlinecite{bukov2015,canovi2016,bukov2016} for numerical results).
The key technical result of our paper will be a theorem generalizing these results
to other regimes in which the driving frequency is not greater than \emph{all} the local scales of the Hamiltonian, but there is nevertheless some separation of energy scales. This will allow us to show that time-translation protected phases can exist in the pre-thermal regime. More precisely, in the models that we construct,
one local coupling strength is large and the others are small; the drive frequency is large compared to the
small couplings, and the parameter $\alpha$ is the ratio of the drive frequency to the largest of
the small local couplings. The term in the
Hamiltonian with large coupling must take a special form, essentially that of a symmetry generator,
that allows it to avoid heating the system.

Accordingly, we will consider a time-dependent Hamiltonian of the form $H(t) = H_0(t) + V(t)$, where
$H_0(t)$ and $V(t)$ are periodic with period $T$. We assume that $\lambda T \ll 1$, where $\lambda$ is the local energy scale of $V$. We further assume that $H_0(t)$ has the property that it generates a trivial time evolution over $N$ time cycles: $U_0(NT,0) = U_0(T,0)^N = 1$, where
\begin{equation}
U_0(t_2,t_1) = \mathcal{T} \exp\left(-i\int_{t_1}^{t_2} H_0(t) \right) dt, \quad \mbox{$\mathcal{T} = $ time-ordering.}
\end{equation}
We claim that such a time evolution will exhibit pre-thermalizing behavior for $\lambda T \ll 1/N$ even if
the local energy scale of ${H_0}(t)$ is comparable to $1/T$. In other words, such a system
exhibits pre-thermalizing behavior when the frequency is large compared some of the couplings
(those in $V(t)$) but not others (those in ${H_0}(t)$), as promised in the introduction.

An easy way to see that this claim is true is to work in the interaction picture
(treating $V$ as the ``interaction''). Then we see that the time evolution of
the total Hamiltonian $H(t)$ over $N$ time cycles is given by
\begin{equation}
\label{UNT}
U(NT,0) = \mathcal{T} \exp\left(-i\int_{0}^{NT} V^{\mathrm{int}}(t) dt\right),
\end{equation}
where $V^{\mathrm{int}}(t) = U_0(0,t)^{\dagger} V(t) U_0(0,t)$ is the representation of $V(t)$ in the interaction picture, and $U_0(0,NT) = 1$ ensures that the time evolution operator \eqnref{UNT} is the same in the interaction and Schr\"odinger pictures. If we rescale time as $t \to t/\lambda$, then \eqnref{UNT} describes a system being driven at the large  frequency $\nu  = 1/(\lambda NT)$ by a drive of local strength 1, which by the results of Refs.~\onlinecite{Abanin2015,Abanin2015a,Abanin2015b,Kuwahara2015,Mori2016}
will exhibit pre-thermalizing behavior for $\nu \gg 1$.

On the other hand, since the above argument for pre-thermalization required coarse-graining the time period from $T$ to $NT$, it prevents us from identifying phases of matter, such as time crystals or Floquet SPT phases, that are protected by time translation symmetry. The problem is that the time-translation symmetry by $T$ is what allows different phases of matter to be sharply distinguished. This symmetry is still present, of course (because the coarse-graining is a feature of our description of the system, not the system itself), but it is no longer manifest. Therefore, it is not at all transparent how to understand the different phases of matter in this picture.

In order to proceed further, we will need a new approach. In this paper, we develop a \emph{new} formalism that analyzes $U(T,0)$ itself rather than $U(NT,0)$, allowing the effects of time-translation symmetry to be seen in a transparent way. Our central tool is a theorem that we will prove, substantially generalizing those of Abanin et al.\cite{Abanin2015a}.
A more precise version of our theorem will be given momentarily, and the proof will be given in
Appendix \ref{sec:proofs}; the theorem essentially states that
there exists a time-independent local unitary rotation $\mathcal{U}$ such that
$U_\text{f} \approx \WUf = \mathcal{U}^{\dagger}(X e^{-iDT})\,\mathcal{U}$, where $X = U_0(T,0)$ is the time evolution of $H_0$ over one time cycle, and $D$ is a quasi-local Hamiltonian that commutes with $X$. The dynamics at stroboscopic times are well-approximated by $\WUf$ for times $t \ll t_{*}$, where $t_{*} = e^{O(1/(\lambda T[\log(1/\lambda T)]^3))}$.
This result combines ideas in Ref.~\onlinecite{Abanin2015a}
about (1) the high-frequency limit of driven systems and (2) approximate symmetries
in systems with a large separation of scales. Recall that,
in the high-frequency limit of a driven system, the Floquet operator can be approximated by
the evolution (at stroboscopic times) due a time-independent Hamiltonian,
$\Uf \approx \exp(-iTH_{\mathrm{eff}})$. Meanwhile, in a static system with a large separation of scales,
$H = -u L + {D_0}$, where $u$ is much larger than the couplings in $D_0$ but $[L,{D_0}]\neq 0$,
Ref.~\onlinecite{Abanin2015a} shows that there is a unitary transformation
$\mathcal{U}$ such that $\mathcal{U} H \mathcal{U}^\dagger \approx -u L + D$ where $[L,D]=0$,
i.e. the system has an approximate symmetry generated by $\mathcal{U}^\dagger L\, \mathcal{U}$.
Our theorem states that, after a time-independent local unitary change of basis, a periodic Hamiltonian $H(t) = H_0(t) + V(t)$, with ${H_0}(t)$ satisfying the
condition given above, can be approximated, as far as the evolution at stroboscopic times is concerned,
by a binary drive that is composed of two components: (1) the action of ${H_0}(t)$ over one
cycle, namely $U_0(T,0)$ and (2) a static Hamiltonian that is invariant under the symmetry generated by ${U_0}(T,0)$. 

These results might seem surprising, because they imply that the evolution over one time period commutes with a symmetry $X = U_0(T,0)$ [or $\mathcal{U} X \mathcal{U}^{\dagger}$ in the original basis], despite the fact that the microscopic time-dependent Hamiltonian $H(t)$ had no such symmetry. We interpret this ``hidden'' symmetry as a shadow of the discrete time-translation symmetry. (For example, the evolution over $N$ time periods also commutes with $\mathcal{U} X \mathcal{U}^{\dagger}$, but if we add weak $NT$-periodic perturbations to break the discrete time-translation symmetry then this is no longer the case.) Thus, our theorem is precisely allowing us to get a handle on the implications of discrete time-translation symmetry. Compare Ref.~\cite{vonKeyserlingk2016a}, where a similar ``hidden'' symmetry was constructed for many-body-localized Floquet time crystals.

The preceding paragraphs summarize the physical meaning of our theorem.
A more precise statement of the theorem, although it is a bit more opaque physically,
is useful because it makes the underlying assumptions manifest. The statement of the theorem
makes use of an operator norm $\| O \|_{n}$ that measures the average over one Floquet cycle
of the size of the local terms whose sum makes up a Hamiltonian; the subscript $n$
parametrizes the extent to which the norm suppresses the weight of operators with larger spatial support.
An explicit definition of the norm is given in Appendix \ref{sec:proofs}. The theorem states the following.

\begin{thm}
\label{mainthm}Consider a periodically-driven system with Floquet operator:
\begin{equation}
U_\text{f} = \,\mathcal{T}\!\exp\!\left(-i\int_{0}^{T} H(t) dt\right)
\end{equation}
where $H(t) = H_0(t) + V(t)$, and $X \equiv U_0(0,T)$ satisfies $X^N=1$ for some
integer $N$. We assume that $H_0(t)$ can be written as a sum $H_0(t) = \sum_i
h_i(t)$ of terms acting only on single sites $i$.
Define $\lambda \equiv \| V \|_1$.
Assume that
\begin{equation}
    \lambda T \leq \frac{\gamma \kappa_1^2}{N+3}, \quad \gamma \approx 0.14.
\end{equation}
Then there exists a (time-independent) unitary $\mathcal{U}$ such that
\begin{equation}
\label{eqn:rotated-Floquet}
\mathcal{U}\, U_\text{f} \, \mathcal{U}^{\dagger} =
X \,\mathcal{T}\!\exp\!\left(-i\int_{0}^{T} [D + E+ V(t)] dt\right)
\end{equation}
where $D$ is local and $[D,X]=0$; $D, E$ are independent of time; and
\begin{align}
\| V \|_{n_*} &\leq \lambda  \left(\frac{1}{2}\right)^{n_*}\\
\| E \|_{n_*} &\leq \lambda  \left(\frac{1}{2}\right)^{n_*}
\end{align}
The exponent $n_*$ is given by
\begin{equation}
n_* = \frac{\lambda_0/\lambda}{[1+\log(\lambda_0/\lambda)]^3}, \quad \lambda_0 = 
\frac{(\kappa_1)^2}{72(N+3)(N+4) T}
\end{equation}
Furthermore,
\begin{equation}
\label{D_first_order}
\| D - \overline{V} \|_{n_*} \leq \mu (\lambda^2/\lambda_0), \quad \mu \approx 2.9,
\end{equation}
where
\begin{align}
\overline{V} &= \frac{1}{NT}\int_0^{NT} V^{\mathrm{int}}(t) dt \cr
  &= \frac{1}{N}\sum_{k=0}^{N-1} X^{-k} \left(\frac{1}{T}\int_0^T V^{\mathrm{int}}(t) dt\right) X^{k}. \cr
\end{align}

\end{thm}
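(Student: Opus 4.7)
The plan is to adapt the iterative Schrieffer--Wolff / KAM scheme of Abanin et al.\cite{Abanin2015a} to the present setting, combining its two ingredients: elimination of high-frequency time-dependence (Magnus-type) and elimination of the part of the effective Hamiltonian that fails to commute with a large term (Schrieffer--Wolff-type). The role of the ``large term'' here is played not by a static operator but by the periodic drive $H_0(t)$, whose one-cycle evolution $X$ satisfies $X^N=1$. First, pass to the interaction picture with respect to $H_0(t)$: with $V^{\mathrm{int}}(t) = U_0(0,t)^\dagger V(t) U_0(0,t)$ one has
\begin{equation}
\Uf = X\,\mathcal{T}\!\exp\!\left(-i\int_0^T V^{\mathrm{int}}(t)\, dt\right).
\end{equation}
Since $H_0(t)$ is on-site, $U_0(0,t)$ is a product of single-site unitaries, so $V^{\mathrm{int}}(t)$ has exactly the same spatial supports as $V(t)$ and in particular $\|V^{\mathrm{int}}\|_1$ is of order $\lambda$. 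The target is then to construct, by iteration, a time-independent local unitary $\mathcal{U}$ bringing $\Uf$ to the form \eqref{eqn:rotated-Floquet}.

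The inductive step is as follows. At stage $m$, suppose $\mathcal{U}_m \Uf \mathcal{U}_m^\dagger$ is already of the form \eqref{eqn:rotated-Floquet} with time-independent $D_m$ commuting with $X$, and with small time-independent $E_m$ and small time-dependent $V_m(t)$. Apply an outer conjugation by $e^{-i\Omega_m}$ (with $\Omega_m$ possibly time-dependent). The generator is determined by coupled cohomological equations on the combined $X$-charge/time-harmonic structure of the remainder: the $X$-invariant zero-harmonic component of $E_m + \tfrac{1}{T}\int_0^T V_m(t)\,dt$ is absorbed into $D_{m+1}$, while the other components are cancelled to leading order by a suitable choice of $\Omega_m$. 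The ``small denominators'' appear on the non-trivial sectors of the conjugation superoperator $\mathcal{X}:O\mapsto XOX^{-1}$ (bounded below by $|e^{2\pi i /N} - 1|\sim 1/N$) and at the non-zero time-harmonics (bounded below by $\sim 1/T$, under control because $\lambda T \ll 1$). Baker--Campbell--Hausdorff then gives that the new remainders $E_{m+1}, V_{m+1}$ are quadratic in the previous ones, yielding a geometric rate of convergence in operator norm.

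The main obstacle, as in Ref.~\cite{Abanin2015a}, is controlling locality through the iteration: each BCH step enlarges operator supports, so one must work with the family of norms $\|\cdot\|_n$ that suppress operators of large support, and track how the effective locality index $n$ degrades from one step to the next. Balancing the geometric decrease in operator size against the geometric loss in locality, and optimizing the number of iterations allowed before the locality index is exhausted, yields the stopping time $n_*$ stated in the theorem; the polylogarithmic factor $[1+\log(\lambda_0/\lambda)]^3$ arises because at each step the locality index must be allowed to decay only polynomially, leaving a margin just large enough that the BCH series converges at every subsequent step.

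Finally, the first-order statement $\|D - \overline{V}\|_{n_*} \leq \mu\,\lambda^2/\lambda_0$ is read off from the very first iteration: what is absorbed into $D_1$ is precisely the $X$-invariant zero-harmonic projection of $V^{\mathrm{int}}(t)$, which equals $\overline{V}$ in either of the two forms given in the theorem statement, while all subsequent corrections to $D$ come from second-order BCH terms and are bounded by $\lambda^2/\lambda_0$.
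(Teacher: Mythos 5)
Your proposal follows essentially the same route as the paper's proof in Appendix~\ref{sec:proofs}: pass to the interaction picture with respect to $H_0(t)$, then iterate a two-part elimination --- first removing the component of the static remainder that fails to commute with $X$, then flattening the residual time-dependence via the Abanin et al.\ procedure --- while tracking locality with the $\kappa_n$-weighted norms and stopping at $n_*$, with $D_1=\overline{V}$ giving the first-order statement. The only cosmetic differences are that the paper cancels the non-$X$-invariant static part exactly with an explicit telescoping generator $A_n=\frac{1}{N}\sum_{k=0}^{N-1}\sum_{p=0}^{k}X^{-p}E_nX^{p}$ (rather than inverting the superoperator $O\mapsto XOX^{-1}$ sector by sector, which is equivalent and yields the same $O(N)$ growth of the generator), and the per-step improvement is in fact linear in the previous remainder with a small prefactor $\sim m(n)\lambda$ --- the genuinely quadratic term is subleading --- which is precisely what produces the geometric decay $K_n\lambda^n$ you correctly state.
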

The proof is given in Appendix \ref{sec:proofs}. The statement of the theorem makes use of a number $\kappa_1$.
It is chosen so that $\|H\|_1$ is finite; the details are given when
the norm is given in Appendix \ref{sec:proofs}.

Unpacking the theorem a bit in order to make contact with the discussion above, we see that
it states that there is a time-independent unitary operator $\mathcal{U}$ that
transforms the Floquet operator into the form $X e^{-iDT}$ with $[D,X]=0$ and local $D$,
up to corrections that are exponentially small in ${n_*} \sim 1/(\lambda T [\ln(1/\lambda T)]^3)$.
These ``error terms'' fall into two categories: time-independent terms that do not commute with $X$, which are
grouped into $E$; and time-dependent terms, which are grouped into $V(t)$. Both types of corrections
are exponentially-small in $n_*$. Since they are exponentially-small $\| E \|_{n_*}, \| V \|_{n_*}\sim \left(1/2\right)^{n_*}$,
these terms do not affect the evolution of the system until exponentially-long times, $t_* \sim e^{C n_*}$
(for some constant $C$). It is not possible to find a time-independent unitary transformation that exactly transforms
the Floquet operator into the form $X e^{-iDT}$ because the system must, eventually, heat up to infinite temperature
and the true Floquet eigenstates are infinite-temperature states, not the eigenstates of an operator of the form
$X e^{-iDT}$ with local $D$. In the interim, however, the approximate Floquet operator $X e^{-iDT}$
leads to Floquet time crystal behavior, as we will discuss in the next Section.

The proof of Theorem 1 constructs $\mathcal{U}$ and $D$
through a recursive procedure,
which combines elements of the proofs of pre-thermalization in driven and undriven systems given by Abanin et al. ~\onlinecite{Abanin2015a}.

In the case of pre-thermal undriven systems, the theorem we need has essentially already been given in Ref.~\onlinecite{Abanin2015a}, but we will restate the result in a form analogous with Theorem \ref{mainthm}, which entails some slightly different bounds (however, they are easily derivable using the techniques of Ref.~\onlinecite{Abanin2015a}).

\begin{thm}
Consider a time-independent Hamiltonian $H$ of the form
\begin{equation}
H = -u L + V,
\end{equation}
where $e^{2\pi i L} = 1$. We assume that $L$ can be written as a sum $L = \sum_i
L_i$ of terms acting only on single sites $i$. Define $\lambda \equiv \| V \|_1$, and assume that
\begin{equation}
\lambda/u \leq \gamma \kappa_1^2, \quad \gamma \approx 0.14.
\end{equation}
Then there exists a local unitary transformation $\mathcal{U}$ such that
\begin{equation}
\mathcal{U} H \mathcal{U}^{\dagger} = -u L + D + \hat{V}
\end{equation}
where $[L,D]=0$ and $\hat{V}$ satisfies
\begin{equation}
\| \hat{V} \|_{n_*} \leq \lambda  \left(\frac{1}{2}\right)^{n_*}\\
\end{equation}
where
\begin{equation}
n_* = \frac{\lambda_0/\lambda}{[1+\log(\lambda_0/\lambda)]^3}, \quad \lambda_0 = \frac{u \kappa_1^2}{144}.
\end{equation}
Furthermore,
\begin{equation}
\label{D_first_order_stat}
\| D - \langle V \rangle \|_{n_*} \leq \mu (\lambda^2/\lambda_0), \quad \mu \approx 2.9,
\end{equation}
\end{thm}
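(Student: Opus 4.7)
The plan is to construct $\mathcal{U}$ as a product of successive local unitary rotations, each of which removes, to leading order, the part of $V$ that fails to commute with $L$. This is a specialization of the methods of Ref.~\onlinecite{Abanin2015a}, and is a simpler, static cousin of the argument underlying Theorem~\ref{mainthm}: there is no time-dependent piece, no interaction-picture evolution, and no Floquet operator to track.

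The basic algebraic input is the decomposition of an operator into eigenspaces of $[L,\cdot\,]$. Because $e^{2\pi i L} = 1$, the spectrum of $L$ is integer-valued, so any local operator $O$ splits as $O = \sum_{m \in \mathbb{Z}} O_m$ with $[L, O_m] = m\, O_m$. Call $O^{\mathrm{d}} \equiv O_0$ the part commuting with $L$, and $O^{\mathrm{od}} \equiv O - O^{\mathrm{d}}$ its complement. The homological equation $[A, -uL] = -O^{\mathrm{od}}$ is solved explicitly by $A = \sum_{m \neq 0} (ium)^{-1} O_m$, which satisfies a bound of the form $\|A\|_n \leq (1/u)\|O^{\mathrm{od}}\|_n$.

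With this in hand, I would set up the iteration $H^{(0)} = -uL + V$ and at each step $n \geq 1$ pick the anti-Hermitian $A_n$ solving $[A_n, -uL] = -(\hat{V}^{(n-1)})^{\mathrm{od}}$, then rotate $H^{(n)} = e^{A_n} H^{(n-1)} e^{-A_n}$. Expanding via Baker--Campbell--Hausdorff gives $H^{(n)} = -uL + D^{(n)} + \hat{V}^{(n)}$, where $D^{(n)} = D^{(n-1)} + (\hat{V}^{(n-1)})^{\mathrm{d}}$ commutes with $L$ and $\hat{V}^{(n)}$ is a sum of nested commutators of $A_n$ with $D^{(n-1)}$ and $\hat{V}^{(n-1)}$, hence higher order in $\lambda/u$. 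In particular $D^{(1)} = \langle V \rangle$, matching the zeroth-order piece appearing in \eqnref{D_first_order_stat}.

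The main technical work is to propagate the norms $\|\cdot\|_n$ across iterations. One uses the submultiplicativity of the norm under commutators (accompanied by a controlled degradation of the index $n$ that reflects the spatial spread of operators), together with the homological bound above, to obtain an inductive estimate $\|\hat{V}^{(n)}\|_m \lesssim \lambda\,(\lambda/\lambda_0)^n$ for a suitable $m$ depending on $n$, valid as long as $n$ is not too large. Balancing the geometric gain $(\lambda/\lambda_0)^n$ against the degradation of the locality index selects the optimal stopping value $n = n_*$ appearing in the theorem, at which $\|\hat{V}^{(n_*)}\|_{n_*} \leq \lambda (1/2)^{n_*}$. Setting $\mathcal{U} = e^{A_{n_*}}\cdots e^{A_1}$, $D = D^{(n_*)}$, $\hat{V} = \hat{V}^{(n_*)}$ then gives the stated form, and \eqnref{D_first_order_stat} follows from summing the tail $D - \langle V \rangle = \sum_{n\geq 2} (\hat{V}^{(n-1)})^{\mathrm{d}}$ using the same geometric bound. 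The hard part is the bookkeeping of norm-index degradation across iterations and the accompanying combinatorics of nested commutators; this is precisely the content of the static lemmas in Ref.~\onlinecite{Abanin2015a}, which can be imported nearly verbatim, since the static setting is strictly easier than the driven setting handled in Theorem~\ref{mainthm}.
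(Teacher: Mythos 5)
Your outline follows essentially the same route as the paper, which does not prove Theorem 2 from scratch but explicitly defers to the static iterative scheme of Abanin et al.\ (their Theorem 3.1 and Section 5.4): successive local rotations solving the homological equation to cancel the part of $V$ not commuting with $L$, with $D^{(1)}=\langle V\rangle$ and the bound on $D-\langle V\rangle$ obtained by summing the geometric tail, exactly as you describe. The quantitative details you leave implicit (the specific value $\lambda_0 = u\kappa_1^2/144$, the threshold $\gamma\kappa_1^2$, and the control of $\sum_m\|O_m\|_n$ using the $e^{p\kappa_n}$ weights in the norm) are likewise left by the paper to the imported lemmas, so your proposal matches the paper's level of rigor and its approach.
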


Here, we have defined, following Ref.~\onlinecite{Abanin2015a}, the symmetrized operator $\langle V\rangle$ according to
\begin{equation}
\langle V \rangle \equiv \int_0^{2\pi} \frac{d\theta}{2\pi}\, e^{iL\theta} \, V \, e^{-iL\theta}
\end{equation}
which, by construction, satisfies $[L,\langle V \rangle]=0$.

\section{Pre-thermalized Floquet time crystals}
\label{sec_floquet_time_crystal}

\subsection{Basic Picture}
\label{floquet_basic_picture}

The results of the previous section give us the tools that we need to construct
a model which is a Floquet time crystal in the pre-thermalized regime. Our
approach is reminiscent of Ref.~\onlinecite{vonKeyserlingk2016a}, where the
Floquet-MBL time crystals of Ref.~\onlinecite{Else2016b} were reinterpreted in terms of a spontaneously broken
``emergent'' $\mathbb{Z}_2$ symmetry. Here, ``emergent'' refers to the fact that
the symmetry is in some sense hidden --
its form depends on the parameters on the Hamiltonian in a manner that is not
\emph{a priori} known. Furthermore, it is not a symmetry of the
Hamiltonian, but is a symmetry of the Floquet operator.

In particular, suppose that we have a model where we can set $X = \prod_i
\sigma^x_i$. (Thus $N=2$). We then have $\Uf \approx \WUf =
\mathcal{U}^{\dagger} (X e^{-iDT}) \mathcal{U}$, where the quasi-local
Hamiltonian $D$ by construction respects the Ising symmetry generated by $X$.
This Ising symmetry corresponds to an \emph{approximate} ``emergent'' symmetry
$\mathcal{U} X \mathcal{U}^{\dagger}$ of $\Uf$ (``emergent'' for the reason stated above
and approximate because it an exact symmetry of $\WUf$, not $\Uf$, and therefore is
approximately conserved for times $t \ll t_*$.) Suppose that $D$ \emph{spontaneously breaks} the symmetry $X$ below some finite critical temperature $\tau_c$. For example, working in two dimensions or higher, we could have $D = -J \sum_{\langle i,j\rangle} \sigma_i^z \sigma_j^z$ plus additional smaller terms of strength which break integrability. We will be interested in the regime where the heating time $t_* \gg t_{\mathrm{pre-thermal}}$, where $t_{\mathrm{pre-thermal}}$ is the thermalization time of $D$.

Now consider the time evolution $\ket{\psi(t)}$, starting from a given short-range correlated state $\ket{\psi(0)}$. We also define the rotated states $\ket{\widetilde{\psi}(t)} = \mathcal{U} \ket{\psi(t)}$. At stroboscopic times $t = nT$, we find that $\ket{\widetilde{\psi}(nT)} =  (Xe^{-iDT})^n \ket{\widetilde{\psi}(0)}$.
Since $(Xe^{-iDT})^2 = e^{-2iDT}$, we see that at even multiples of the period, $t = 2nT$, the time evolution of $\ket{\widetilde{\psi}(t)}$ is described by the time-independent Hamiltonian $D$.
Thus, we expect that, after the time $t_{\mathrm{pre-thermal}}$, the system appears to be in a thermal state of $D$ at temperature $\tau$. Thus, $\ket{\widetilde{\psi}(2nT)}\bra{\widetilde{\psi}(2nT)} \approx \widetilde{\rho}$, where $\widetilde{\rho}$ is a thermal density matrix for $D$ at some temperature $\tau$, and the approximate equality means that the expectation values of local observables are approximately the same. Note that for $\tau < \tau_c$, the Ising symmetry of $D$ is spontaneously broken and $\widetilde{\rho}$ must either select a nonzero value for the order parameter $M_{2n} = \langle \sigma^z_i \rangle_{\widetilde{\rho}}$ or have long-range correlations. The latter case is impossible given our initial state, as long-range correlations cannot be generated in finite time. Then, at odd times $t = (2n+1)T$, we have 
\begin{align}
\ket{\widetilde{\psi}((2n+1)T)} \bra{\widetilde{\psi}((2n+1)T)} &\approx (X e^{-iDT}) \widetilde{\rho} (e^{iDT} X) \\ &= X \widetilde{\rho} X
\end{align}
 (since $\widetilde{\rho}$ commutes with $D$.) Therefore, at \emph{odd} times, the order parameter
\begin{equation}
M_{2n+1} = \langle \sigma^z_i \rangle_{X\widetilde{\rho}X} = -M_{2n}.
\end{equation}
Thus, the state of the system at odd times is different from the state at even times, and time translation by $T$ is spontaneously broken to time translation by $2T$.

\begin{figure*}
    \includegraphics[width=18cm]{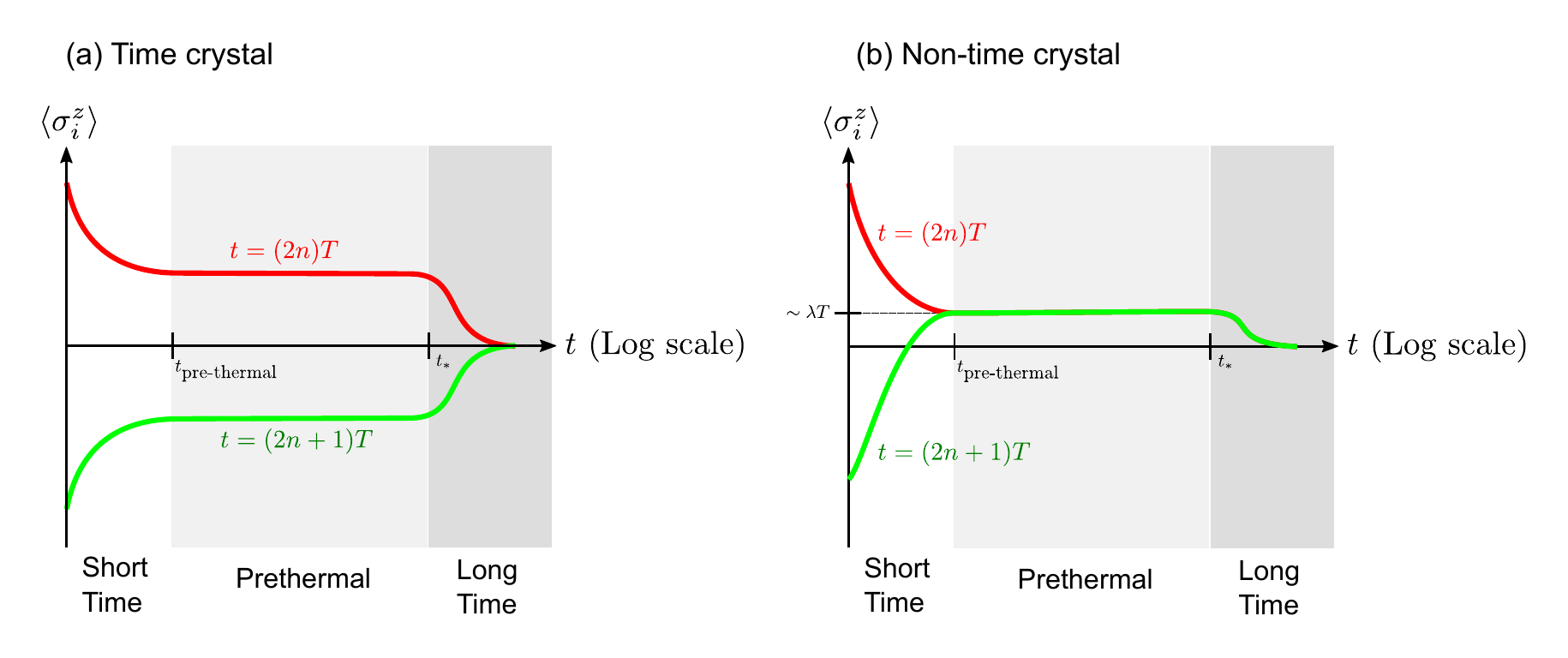}
\caption{\label{Diagrams_1} The expected time dependence of $\langle \sigma_i^z\rangle$ at stroboscopic times, starting from a state which is low-temperature with respect to $\mathcal{U} D \mathcal{U}^{\dagger}$ (for example, for a state with all spins polarized in the $z$ direction.), in (a) the pre-thermal time crystal phase, and (b) the non-time crystal pre-thermal phase.}
\end{figure*}

The above analysis took place in the frame rotated by $\mathcal{U}$. However, we
can also consider the expectation values of operators in the original frame, for
example $\bra{\psi(t)} \sigma_i^z \ket{\psi(t)} = \bra{\widetilde{\psi}(t)}
\mathcal{U}^{\dagger} \sigma_i^z \mathcal{U} \ket{\widetilde{\psi}(t)}$. The rotation
$\mathcal{U}$ is close to the identity in the regime where the heating time is
large\footnote{Specifically, it follows from the construction of $\mathcal{U}$
that $\mathcal{U} = 1 + O(\lambda T)$, and $\lambda T \ll 1$ is the regime where
the heating time is large.}, so $\sigma_i^z$ has large overlap with $\mathcal{U}^{\dagger} \sigma_i^z
\mathcal{U}$ and therefore will display fractional frequency oscillations. We
recall that the condition for fractional frequency oscillations in the
pre-thermalized regime is that (a) $D$ must spontaneously break the Ising
symmetry $X$ up to a finite critical temperature $\tau_c$; and\	 (b)  the energy
density with respect to $D$ of $\mathcal{U} \ket{\psi(0)}$ must correspond to a
temperature $\tau < \tau_c$. In Figure \ref{Diagrams_1}, we show the expected behavior at low temperatures $\tau$ and contrast it with the expected behavior in a system which is not a time crystal in the pre-thermal regime.

\subsection{Example: periodically-driven Ising spins}

Let us now consider a concrete model which realizes the behavior descrived above. We consider an Ising ferromagnet, with a longitudinal field applied to break the Ising symmetry explicitly, and driven at high frequency by a very strong transverse field. Thus, we take
\begin{equation}
H(t) = H_0(t) + V,
\end{equation}
where
\begin{align}
H_0(t) &= -\sum_i h^x(t) \sigma_i^x \\
V &= -J \sum_{\langle i,j\rangle} \sigma_i^z \sigma_j^z - h^z \sum_i \sigma_i^z,
\end{align}
and we choose the driving profile such that
\begin{equation}
\int_0^T h^x(t) dt = \frac{\pi}{2},
\end{equation}
ensuring that the ``unperturbed'' Floquet operator $U_0$ implements a $\pi$ pulse, $X = \prod_i \sigma_x^i$, and we can set $N = 2$. (If the driving does not exactly implement a $\pi$ pulse, this is not a significant problem since we can just incorporate the difference into $V$.) This implies that $h_x \sim 1/T$, and we assume that $h^z \lesssim J \ll 1/T$.

Then by the results of Section \ref{prethermalization_results} (with $J$ playing
the role of $\lambda$ here),
we find a quasi-local Hamiltonian $D = \overline{V} + \frac{1}{T}O((JT)^2)$, where
\begin{align}
\overline{V} &= \frac{1}{2T} \int_0^{2T} V_{\mathrm{int}}(t) dt.
\end{align}
In particular, in the case where the $\pi$ pulse acts instanteously, so that
\begin{equation}
h^x(t) = \frac{\pi}{2} \sum_{k=-\infty}^{\infty} \delta(t - kT),
\end{equation}
we find that
\begin{equation}
\label{ising_averaged}
\overline{V} = -J \sum_{\langle i, j\rangle} \sigma_i^z \sigma_j^z
\end{equation}
(this Hamiltonian is integrable, but in general the higher order corrections to $D$ will destroy integrability.)
More generally, if the delta function is smeared out so that the $\pi$ pulse acts over a time window $\delta$, the corrections from \eqnref{ising_averaged} will be at most of order $\sim J \delta/T$. Therefore, so long as $\delta \ll T$, then in two dimensions or higher, the Hamiltonian $D$ will indeed spontaneously break the Ising symmetry up to some finite temperature $\tau_c$, and we will observe the time-crystal behavior described above.

\subsection{Field Theory of the Pre-Thermal Floquet Time Crystal State}
\label{sec:FTC-field-theory}

The universal behavior of a pre-thermal Floquet time crystal state can be encapsulated in a field theory.
For the sake of concreteness, we derive this theory from the model analyzed in the previous section.
The Floquet operator can be written, up to nearly exponential accuracy, as:
\begin{equation}
 U_\text{f}   \approx \mathcal{U} (X e^{-iDT}) \mathcal{U}^{\dagger} 
\end{equation}
Consequently, the transition amplitude from an initial state $|{\psi_i}\rangle$ at time $t_0$
to a final state $|{\psi_f}\rangle$ at time ${t_0}+mT$
can be written in the following form, provided $t_{\mathrm{pre-thermal}} < t_0 < t_0 + mT < t_*$ :
\begin{align}
\label{eqn:transition-amplitude}
\langle{\psi_f}| \big({U_\text{f}}\big)^m |{\psi_i}\rangle
& = \langle{\psi_f}| \,\mathcal{U} (X e^{-iDT})^m \mathcal{U}^{\dagger}\,  |{\psi_i}\rangle\cr
& = \langle\tilde{\psi}_{f}| \, e^{-iDmT} \, |\tilde{\psi}_{i}\rangle
\end{align}
where $|\tilde{\psi}_{i}\rangle\equiv \mathcal{U}^{\dagger}  |{\psi_i}\rangle$ and
$|\tilde{\psi}_{f}\rangle\equiv X^m \,\mathcal{U}^{\dagger}  |{\psi_f}\rangle$;
recall that $X^m$ is $1$ or $X$ for, respectively, $m$ even or odd.

The second line of Eq. (\ref{eqn:transition-amplitude}) is just the transition amplitude for the quantum
transverse field Ising model in $(d+1)$-dimensional spacetime, with $d\geq 2$.
The model has nearest-neighbor interaction (\ref{ising_averaged}) together with higher-order terms that
are present in the full expression for $D$. Hence,
it can be represented by the standard functional integral
for the continuum limit of the Ising model:
\begin{multline}
\label{eqn:FI-real-time}
\langle\tilde{\psi}_{f}| \, e^{-iDmT} \, |\tilde{\psi}_{i}\rangle\, =\\ \int {\cal D}\varphi\,
e^{i\int {d^d}x \, dt \,  \left[\frac{1}{2}K{({\partial_t} \varphi)^2} - \frac{v^2}{2}K{(\nabla\varphi)^2} - U(\varphi)\right]}
\end{multline}
where $U(\varphi)$ has minima at $\varphi=\pm\varphi_0$ when the parameters in the Ising model place
it in the ordered phase. This functional integral is only valid for wavevectors that are
less that a wavevector cutoff: $|q| < \Lambda$,
where $\Lambda \ll 1/a$ and $a$ is the spatial lattice spacing.
Although the right-hand side of (\ref{eqn:FI-real-time})
has a continuous time variable, it is only
equal to the original peridiodically-driven problem for stroboscopic times $t=mT$ for
$m\in \mathbb{Z}$. Note the left-hand side of (\ref{eqn:FI-real-time}) is also well-defined for
arbitrary times, i.e. for continuous $m$, although it, too, only corresponds to the original
problem for integer $m$. Thus the continuous-time effective field
theory has a frequency cutoff $\Lambda_\omega$ that we are free to choose.
Although the functional integral only corresponds to the original problem for stroboscopic times, the
functional integral is well-defined for all times.
As a result of the factor of $X$ in $U_\text{f}$, the field $\varphi$ is related to the Ising spin according to
$\varphi(x,kT) \sim (-1)^k \, \sigma(x,kT)$. In other words, the field $\varphi$ in the functional integral
has the intepretation of the temporally-staggered magnetization density, just as, in the corresponding description of
an Ising anti-ferromagnet, this field would be the spatially-staggered magnetization. Discrete time-translation symmetry,
$t\rightarrow t + T$ has the following action: $\varphi \rightarrow -\varphi$. Thus, the symmetry-breaking phase,
in which $\varphi=\pm\varphi_0$, is a pre-thermal Floquet time crystal, in which TTSB occurs, as expected.

The rotated Floquet operator $\mathcal{U}^{\dagger} {U_\text{f}}\, \mathcal{U}$
has an approximate $\mathbb{Z}_2$ symmetry generated by the operator $X$ since
$\mathcal{U}^{\dagger} {U_\text{f}}\,  \mathcal{U} \approx X e^{-iDT}$ and $[D,X]=0$.
Hence, $\mathcal{U}^{\dagger} X \mathcal{U}$ commutes with the (unrotated) Floquet operator
${U_\text{f}}$. It is not a microscopic symmetry in the conventional sense, since
$\mathcal{U}^{\dagger} X\mathcal{U}$ does not commute
with the time-dependent Hamiltonian $H(t)$, except for special fine-tuned points in the Floquet time crystal phase.
However, since it commutes with the Floquet operator, it is a symmetry of the
continuum-limit field theory (\ref{eqn:FI-real-time}). (See Ref.~\onlinecite{vonKeyserlingk2016a}
for a discussion of Floquet time crystals in the MBL context that focuses on
such symmetries, sometimes called ``emergent symmetries''.)
Within the field theory (\ref{eqn:FI-real-time}),
this symmetry acts according to $\varphi\rightarrow -\varphi$, i.e. it acts in precisely the same way as
time-translation by a single period. Again, this is analogous to the case of an
Ising anti-ferromagnet, but with the time-translation taking the place of spatial translation. Thus, it is possible
to view the symmetry-breaking pattern as $\mathbb{Z}_\text{TTS} \times \mathbb{Z}_2 \rightarrow \mathbb{Z}$.
The unbroken $\mathbb{Z}$ symmetry is generated by the combination of time-translation by one period and
the action of $\mathcal{U}^{\dagger} X \mathcal{U}$.

However, there is an important difference
between a Floquet time crystal and an Ising antiferromagnet. In the latter case, it is possible to explicitly
break the the Ising symmetry without breaking translational symmetry
(e.g. with a uniform longitudinal magnetic field) and vice versa (e.g. with a spatially-oscillating exchange coupling).
In a Floquet time crystal, this is not possible because there is always a $\mathbb{Z}_2$ symmetry
$\mathcal{U}^{\dagger} X\mathcal{U}$ regardless of what small perturbation (compared to the drive
frequency) is added to the Hamiltonian. The only way to explicitly prevent the system from having a
$\mathbb{Z}_2$ symmetry is to explicitly break the time-translation symmetry. Suppose the
Floquet operator is $\mathcal{U} X e^{-iDT} \mathcal{U}^{\dagger}$. When a
weak perturbation with period $2T$ is added, the Floquet operator can be written in the approximate form
$\mathcal{U}' e^{-2i(D+Y)T} (\mathcal{U}')^{\dagger}$ where $Y$ is due to the doubled-period
weak perturbation, but it is not possible to guarantee that $[X,Y]=0$. Thus there is a symmetry generated
by an operator of the form $\mathcal{U}^{\dagger} X\mathcal{U}$
only if time-translation symmetry is present -- i.e. it is a consequence of time-translation symmetry
and pre-thermalization.

This functional integral is computed with boundary conditions on $\varphi$ at $t=t_0$ and ${t_0}+mT$.
Time-ordered correlation functions can be computed by inserting operators between the factors of $U_\text{f}$.
However, if we are interested in equal-time correlation functions (at stroboscopic times $t=kT$),
\begin{multline}
\langle\psi| \, \hat{O}(x,kT) \hat{O}(0,kT) \, |\psi\rangle\, \equiv\\
\langle\psi| \,\big({U_\text{f}}\big)^{-k} \, \hat{O}(x,0) \hat{O}(0,0) \, \big({U_\text{f}}\big)^{k}\, |\psi\rangle
\end{multline}
then we can make use of the fact that the system rapidly pre-thermalizes to
replace $\big({U_\text{f}}\big)^{k} |\psi\rangle$ by a thermal state:
\begin{multline}
\langle\psi| \big({U_\text{f}}\big)^{-k} \, \hat{O}(x,0) \hat{O}(0,0) \, \big({U_\text{f}}\big)^{k} |\psi\rangle \, =\\
\mathrm{tr}(e^{-\beta D} \hat{O}(x) \hat{O}(0))
\end{multline}
where $\beta$ is determined by $\mathrm{tr}(e^{-\beta D} D) = \langle\psi| D |\psi\rangle$.
The latter has an imaginary-time functional integral representation:
\begin{multline}
\label{eqn:FI-im-time}
\mathrm{tr}(e^{-\beta D} \hat{O}(x) \hat{O}(0)) \,=\\
\int {\cal D}\varphi\,
e^{-\!\int {d^d}x \, d\tau \,  \left[\frac{1}{2}K{({\partial_\tau} \varphi)^2} + \frac{v^2}{2}K{(\nabla\varphi)^2}
+ U(\varphi)\right]}
\end{multline}
This equation expresses equal-time correlation functions in a pre-thermal Floquet time crystal in terms of
the standard imaginary-time functional integral for the Ising model but with the understanding
that the field $\varphi$ in the functional integral is related to the Ising spins in the manner noted above.

In order to compute unequal-time correlation functions, it is convenient to use the Schwinger-Keldysh formalism
\cite{Schwinger61,Keldysh64} (see Ref. \onlinecite{Kamenev04} for a modern review).
This can be done by following the logic that led from the first line of Eq. (\ref{eqn:transition-amplitude}) to the second
and thence to Eq. (\ref{eqn:FI-real-time}). This will be presented in detail elsewhere~\cite{OpenSystems}.

We close this subsection by noting that the advantage of the field theory formulation
of a pre-thermal Floquet time crystal is the salience of the
similarity with the equilibrium Ising model; for instance, it is clear that the transition out of the Floquet time crystal
(e.g. as a function of the energy of the initial state) in the pre-thermal regime is an ordinary Ising phase transition.
The disadvantage is that it is difficult to connect it to measurable properties in a quantitative way because
the field $\varphi$ has a complicated relationship to the microscopic degrees of freedom.

\subsection{Relation to formal definitions of time crystals}
In the above discussion, we have implicitly been adopting an ``operational''
definition of time-crystal: it is a system in which, for physically reasonable
initial states, the system displays oscillations at a frequency other than the
drive frequency forever (or at least, in the pre-thermal case, for a nearly exponentially
long time.)
This is a perfectly reasonable definition of time crystal,
but it has the disadvantage of obscuring the analogies with spontaneous breaking
of other symmetries, which tends not to be defined in this way. (Although in
fact it could be; for example, an ``operational'' definition of spontaneously broken Ising
symmetry, say, would be a system in which the symmetry-breaking order parameter
does not decay with time for physically reasonable initial
states\cite{Pekker2014}.)
It was for
this reason that in Ref.~\onlinecite{Else2016b} we introduced a formal definition of time-translation
symmetry-breaking in MBL systems in terms of eigenstates (two equivalent formulations of which we called
TTSB-1 and TTSB-2.)

The definitions TTSB-1 and TTSB-2 of Ref.~\onlinecite{Else2016b} are natural generalizations of
the notion of ``eigenstate order'' used to define spontaneous breaking of other
symmetries in MBL \cite{Huse2013,Pekker2014}. On the other hand they, like the notion of eigenstate order
in general, are not really appropriate outside of the MBL context. In this
subsection, we will review the usual formal definitions of spontaneous symmetry breaking
in equilibrium. Then we will show how they can be extended in a natural way to
time-translation symmetries, and that these extended versions are satisfied by
the pre-thermal Floquet time crystals constructed above.

Let us first forget about time-translation symmetry, and consider a
time-independent Hamiltonian $H$ with
an Ising symmetry generated by $X$. Let $\rho$ be a steady state of the
Hamiltonian; that is, it is invariant under the time evolution generated by $H$.
(Here, we work in the thermodynamic limit, so by $\rho$ we really mean a
function which maps local observables to their expectation values; that is, we
define a state in the $C^{*}$-algebra sense \cite{Haag1996}.) Generically, we expect $\rho$ to
be essentially a thermal state.
If the symmetry is spontaneously broken, then $\rho$ can obey the cluster
decomposition (i.e.\ its correlations can be short-ranged), or it can be
invariant under the symmetry $X$, but \emph{not} both. That is, any state
invariant under the symmetry decomposes as $\rho = \frac{1}{2}(\rho_{\uparrow} +
\rho_{\downarrow})$,
where $\rho_{\uparrow}$ and $\rho_{\downarrow}$ have opposite values of the Ising order
parameter, and are mapped into each other under $X$. Thus, a formal definition
of spontaneously broken Ising symmetry can be given as follows. We call a
symmetry-invariant steady state $\rho$ state an \emph{extremal
symmetry-respecting state} if there do not exist states
$\rho_1$ and $\rho_2$ such that $\rho = p \rho_1 + (1-p)\rho_1$ for some $p \in
(0,1)$, where $\rho_1$ and $\rho_2$ are symmetry-invariant steady states. We say
the Ising symmetry is spontaneously broken if extremal symmetry-invariant steady
states do not satisfy the cluster decomposition. Similar statements can be made
for Floquet systems, where by ``steady state'' we fnow mean a state that returns
to itself after one time cycle.

We can now state the natural generalization to time-translation symmetry. For
time-translation symmetry, ``symmetry-invariant'' and ``steady state'' actually
mean the same thing. So we say that time-translation symmetry is spontaneously
broken if extremal steady states do not satisfy the cluster decomposition. This
is similar to our definition TTSB-2 from
Ref.~\onlinecite{Else2016b} (but not exactly the same, since TTSB-2 was expressed in terms of
eigenstates, rather than extremal steady states in an infinite system), so we call it TTSB-2$'$.
We note that TTSB-2$'$ implies that any
short-range correlated state $\rho$, i.e.\ a state $\rho$ which satisfies the cluster
decomposition, must not be an extremal steady state. Non-extremal states never
satisfy the cluster decomposition, so we conclude that short-range correlated
states must not be steady states at all, so they cannot simply return to
themselves after one time cycle.
(This is similar to, but again not identical with, TTSB-1 in
Ref.~\onlinecite{Else2016b}.)

We note that, for clean systems, the only steady state of the Floquet
operator $\Uf$ is believed to be the infinite temperature state\cite{DAlessio2014,Lazarides2014,Ponte2015a}
which always obeys the cluster property, and hence time translation symmetry is not broken spontaneously. This does not contradict our previous results, since we already saw that time translation symmetry is only spontaneously broken in the pre-thermal regime, not at infinitely long times. Instead, we should examine the steady states of the \emph{approximate} Floquet operator $\WUf$ which describes the dynamics in the pre-thermal regime. We recall that, after a unitary change of basis, $\WUf = X e^{-iDT}$, where $D$ commutes with $X$ and spontaneously breaks the Ising symmetry generated by $X$ (for temperatures $\tau < \tau_c$). Hence $\WUf^2 = e^{-2iDT}$. Any steady state $\rho$ of $\WUf$ must be a steady state of $\WUf^2$, which implies (if its energy density corresponds to a temperature $\tau < \tau_c$) that it must be of the form $\rho = t \rho_{SB} + (1-t)X \rho_{SB} X$, where $\rho_{SB}$ is an Ising symmetry-breaking state of temperature $\tau$ for the Hamiltonian $D$. Hence, we see (since $\rho_{SB}$ is invariant under $e^{-iDT}$) that $\WUf \rho \WUf^{\dagger} = t X \rho_{SB} X + (1-t) \rho_{SB}$. So if $\rho$ is a steady state of $\WUf$ and not just $\WUf^2$, we must have $t = 1/2$. But then the state $\rho$ clearly violates the cluster property. Hence, time translation is spontaneously broken.

\section{Spontaneously-broken continuous time-translation symmetry in the pre-thermal regime}
\label{sec_continuous_time_crystal}

\subsection{Basic Picture}
\label{continous_basic_picture}

The pre-thermalized Floquet time crystals discussed above have a natural analog in undriven systems with \emph{continuous} time translation symmetry. Suppose we have a time-independent Hamiltonian
\begin{equation}
\label{eqn:cTTSB-Hamiltonian}
H = -u L + V,
\end{equation}
where the eigenvalues of $L$ are integers; in other words, for time $T =
2\pi/u$, the condition $e^{in u L T} = 1$ holds for all $n\in\mathbb{Z}$. We
also assume that
$L$ is a sum of local terms of local strength $O(1)$;
and $V$ is a local Hamiltonian of local strength $\lambda \ll u$. Then by Theorem 3.1 of
Ref.~\onlinecite{Abanin2015a}, restated in Theorem 2 in Section \ref{prethermalization_results}),
there exists a local unitary $\mathcal{U}$ such that
$\mathcal{U} H \,\mathcal{U}^{\dagger} = -uL + D + \hat{V}$ such that $[D,L]=0$ and the local strength of
$\hat{V}$ is $\sim \lambda \, e^{-O([\log \lambda T]^3/[\lambda T])}$.
As noted in Theorem 2 in Section \ref{prethermalization_results}),
the first term in the explicit iterative construction of $D$ in Ref.~\onlinecite{Abanin2015a} is
$D = \langle{V}\rangle + \frac{1}{T}O(\lambda T)^2$, where
\begin{equation}
  \langle{V}\rangle \equiv \frac{1}{2\pi}\int_0^{2\pi} d\theta \, e^{iL\theta} V e^{-iL\theta}.
\end{equation}
As a result of this theorem, such a system
has an approximate U(1) symmetry generated by $\mathcal{U}^{\dagger} L \,\mathcal{U}$
that is explicitly broken only by nearly exponentially-small terms. 
Consequently, $\mathcal{U}^{\dagger} L \,\mathcal{U}$ is conserved by the dynamics of
$H$ for times $t \ll t_* = e^{O([-\log \lambda T]^3/[\lambda T])}$. We will call the Hamiltonian
$-u L+D$ the ``pre-thermal'' Hamiltonian, since it governs the dynamics of the system for times short compared
to $t_*$. We will assume that we
have added a constant to the Hamiltonian such that $L$ is positive-definite;
this will allow us to abuse terminology a little by referring to the expectation value of $L$ as the ``particle
number'', in order to make analogies with well-known properties of
Bose gases, in which the generator of the $\U(1)$ symmetry is the particle number
operator. In this vein, we will call $u$ the electric potential, in analogy with (negatively) charged superfluids.

We will further suppose that $D$ is neither integrable
nor many-body localized, so that
the dynamics of $D$ will cause an arbitrary initial state $|{\psi_0}\rangle$
with non-zero energy density and non-zero $\langle{\psi_0}| L|{\psi_0}\rangle$ to rapidly thermalize
on some short (compared to $t_*$) time scale $t_{\mathrm{pre-thermal}}\sim\lambda^{-1}$. The
resulting thermalized state can be characterized by the expectation values of $D$ and $L$,
both of which will be the same as in the initial state, since energy and particle number are
conserved. Equivalently, the thermalized state can be characterized by its
temperature $\beta$ (defined with respect to $D$) and effective chemical potential $\mu$.
In other words, all local correlation functions of local operators can be computed with
respect to the density matrix $\rho = e^{-\beta(D-\mu L)}$.
The chemical potential $\mu$ has been introduced to enforce the condition
$\text{tr}(\rho L) = \langle{\psi_0}| L|{\psi_0}\rangle$.

Now suppose that we choose $V$ such that $D$ spontaneously breaks the $U(1)$
symmetry in some range of temperature $1/\beta$ and chemical potential $\mu$.
Suppose, further, that we prepare the system in a short-range correlated initial state
$\ket{\psi_0}$ such that the energy density (and hence, its temperature) is sufficiently low, and the number
density sufficiently high, so that the corresponding thermalized state spontaneously
breaks the $\mathrm{U}(1)$ symmetry generated by $L$. Then, the preceding statement must be slightly
revised: all local correlation functions of local operators can be computed with
respect to the density matrix $\rho = e^{-\beta(D-\mu L - \epsilon X)}$ for some $X$
satisfying $[X,L]\neq 0$. The limit $\epsilon\rightarrow 0$ is taken after the thermodynamic limit is taken;
the direction of the infinitesimal symmetry-breaking field $X$ is determined by the initial state.
To avoid clutter, we will not explicitly write the $\epsilon X$in the next paragraph, but it is understood.

Consider an operator $\Phi$ that satisfies $[L,\Phi]=\Phi$.
(For example, if we interpret $L$ as the particle number, we can take $\Phi$ to be the particle creation operator.)
Its expectation value at time $t$ is given by
\begin{multline}
\label{eqn:c-TTSB-exp-val}
\langle{\psi_0}|e^{-i (-uL + D) t} \Phi e^{i (-uL + D) t}|{\psi_0}\rangle \\
= \text{tr}\!\left(\left[e^{-i (-uL+D) t} \Phi e^{i (-uL+D) t}\right]
e^{-{\beta}(D-\mu L)}\right)\\
=  e^{i (\mu-u)t}\,
\text{tr}\!\left(\left[e^{-i (-\mu L+D) t}\, \Phi \, e^{i (-\mu L+D) t}\right]
e^{-{\beta}(D-\mu L)}\right)\\
\end{multline}
According to the discussion in Appendix \ref{phase_winding}, which makes use of the
result of Watanabe and Oshikawa \cite{Watanabe15}, the trace on
the right-hand-side of the second equality must be independent of time. Hence, so long as
$\mathrm{Tr}(\Phi e^{-\beta(D - \mu L)}) \neq 0$ (which we assume to be true for some order parameter $\Phi$ in the symmetry-breaking phase), we find that the expectation value of $\Phi$ oscillates with frequency given by the ``effective
electrochemical potential'' $\mu-u$ due to the winding of the phase of $\Phi$.

If the dynamics were exactly governed by $-uL + D$, then the system would
oscillate with period $2\pi/(u-\mu)$ forever. As it is, these oscillations will be
observed until the exponentially late time $t_*$. At infinitely long times, the
system approaches a thermal state of the full Hamiltonian $-uL + D + \hat{V}$.
Since $\hat{V}$ is small, this is approximately the same as a thermal state of $-uL + D$.
However, because $\hat{V}$ is not exactly zero, the particle number is not
conserved and in equilibrium the system chooses the particle number that minimizes its
free energy, which corresponds to the ``electrochemical potential'' being zero,
$\mu - u = 0$. Since this corresponds to zero frequency of oscillations, it
follows that no oscillations are observed at infinite time.

The above discussion is essentially the logic that was discussed
in Refs.~\onlinecite{Wilczek13,Volovik2013,Watanabe15},
where it was pointed out that a superfluid
at non-zero chemical potential is a time crystal as a result of the well-known
time-dependence of the order parameter \cite{Pethick2008}.
However, there is an important difference: the U(1) symmetry is not a symmetry of
the Hamiltonian of the problem and, therefore, does not require
fine-tuning but, instead, emerges in the $u \rightarrow \infty$ limit, thereby
evading the criticism \cite{Nicolis2012,Castillo2014,Thies2014,Volovik2013,Watanabe15}
that the phase winds in the ground state only if the U(1) symmetry is exact.

\subsection{Example: XY Ferromagnet in a Large Perpendicular Field}

Consider the concrete example of a spin-1/2 system in three spatial dimensions,
with Hamiltonian
\begin{multline}
\label{eqn:XXZ}
  H = -h^z \sum_i S_i^z - h^x \sum_i S_i^x 
  \\- \sum_{i,j}\left[ J^x_{ij} S_i^{x} S_j^{x} + J^y_{ij} S_i^{y} S_j^{y} + J^z_{ij} S_i^{z} S_j^{z}\right],
\end{multline}
We take $L = S^z \equiv \sum_i S_i^z$, and the longitudinal magnetic field $h^z$ plays the
role of $u$ in the preceding section. We take
$J_{ij}$ and $J^z_{ij}$ to vanish except for nearest neighbors, for which $J^x_{ij}=J + \delta J$,
$J^y_{ij}=J_y +\delta J$, and $J^z_{ij}=J^z$. (We do not assume $\delta J \ll J$.)
The local scale of $V$ is given by $\lambda = \text{max}(J+\delta J,{h^x})$,
so that the condition $\lambda \ll T^{-1} \sim h^z$ is satisfied if $J+ \delta J, h^x  \ll h^z$.
In this case, $D$ is (to first order) the Hamiltonian of an $XY$ ferromagnet:
\begin{equation}
D = - \sum_{\langle i,j\rangle } \left[J (S_i^{x} S_j^{x} + S_i^{y} S_j^{y}) + J^z
S_i^{z} S_j^{z}\right] + \frac{1}{T} O(\lambda/h^z)^2.
\end{equation}

Then, starting from a short-range correlated state with appropriate values of energy and $\langle S^z\rangle$, we expect that time evolution governed by $D$ causes the system to ``pre-thermalize'' into a symmetry-breaking state with some value of the order parameter $\langle S_i^{+} \rangle = n_0 e^{i\phi}$. According to the preceding discussion, the order parameter will then rotate in time with angular frequency $\omega = \mu - h^z$ (where $\mu \lesssim \lambda$ is determined by the initial value of $\langle S^z \rangle$) for times short
compared to the thermalization time $t_*$.

Note, however, that we have assumed that the system is completely isolated. If the system is not isolated, then
the periodic rotation of the order parameter will cause the system to emit radiation, and this radiation will
cause the system to decay to its true ground state \cite{Bruno13a,Bruno13c}. 

\subsection{Field Theory of Pre-Thermal Continuous-TTSB Time Crystal}
\label{sec:cTTSB-field-theory}

For simplicity we will give only the imaginary-time field theory for equal-time correlation functions
deep within the pre-thermal regime; the Schwinger-Kelysh functional integral for
unequal-time correlation functions, with nearly exponentially-small thermalization effects taken into account,
will be discussed elsewhere \cite{OpenSystems}. Introducing the field
$\phi \sim ({S_x} + i {S_y}) e^{i (\mu-u)t}$, we apply Eq. (\ref{eqn:c-TTSB-exp-val})
to the XY ferromagnet of the previous section, thereby obtaining the effective action:
\begin{equation}
\label{eqn:cTTSB-im-time}
S_\text{eff}=
\int {d^d}x \, d\tau \,  \left[{\phi^*}\partial_\tau \phi - \mu{\phi^*}\phi + g({\phi^*}\phi)^2 + \ldots\right]
\end{equation}
The $\ldots$ represents higher-order terms. The U(1) symmetry generated by $S^z$
acts according $\phi\rightarrow e^{i\theta}\phi$. Time-translation symmetry acts according to
$\phi(t) \rightarrow  e^{i (\mu-u)a}\, \phi(t+a) $ for any $a$. Thus, when $\phi$ develops
an expectation value, both symmetries are broken and a combination of them is preserved
according to the symmetry-breaking pattern $\mathbb{R}_\text{TTS} \times U(1) \rightarrow \mathbb{R}$,
where the unbroken $\mathbb{R}$ is generated by a gauge transformation by $\theta$ and a time-translation
$t\rightarrow t +\frac{\theta}{\mu - u}$.

From the mathematical equivalence of Eq. (\ref{eqn:cTTSB-im-time}) to the effective field theory
of a neutral superfluid, we see that (1) in 2D, there is a quasi-long-range-ordered phase -- an
`algebraic time crystal' -- for initial state energies below a Kosterlitz-Thouless transition; 
(2) the TTSB phase transition in 3D is in the ordinary XY universality class in 3D; (3) the 3D time crystal
phase has Goldstone boson excitations. If we write
$\phi(x,t) = \sqrt{\left(\frac{\mu}{2g}+\delta\rho(x,t)\right)}\, e^{i\theta(x,t)}$,
and integrate out the gapped field $\delta\rho(x,t)$, then the effective action for the gapless Goldstone boson
$\theta(x,t)$ is of the form discussed in Ref.~\onlinecite{Castillo2014}.

\section{Pre-thermalized Floquet topological phases}
\label{sec_sxt}
We can also apply our general results of Section \ref{prethermalization_results} to Floquet
symmetry-protected (SPT) and symmetry-enriched (SET) topological phases, even
those which don't exist in stationary systems. (We will henceforth use the abbreviation SxT to refer to either SPT or SET phases.)

As was argued in Refs.~\onlinecite{Else16a,Potter16}, any such phase protected by symmetry $G$ is analogous to a topological phase of a \emph{stationary} system protected by symmetry $\mathbb{Z} \rtimes G$, where the extra $\mathbb{Z}$ corresponds to the time translation symmetry. Here the product is semi-direct for anti-unitary symmetries and direct for unitary symmetries. For simplicity, here we will consider only unitary symmetries. Similar arguments can be made for anti-unitary symmetries.

We will consider the class of phases which can still be realized when the $\mathbb{Z}$ is refined to $\mathbb{Z}_N$. That is, the analogous stationary phase can be protected by a unitary representation $W(\widetilde{g})$ of the group $\widetilde{G} = \mathbb{Z}_N \times G$.
 Then, in applying the general result of Section \ref{prethermalization_results}, we will choose $H_0(t)$ such that its time evolution over one time cycle is equal to $X \equiv W(\mathbb{T})$, where $\mathbb{T}$ is the generator of $\mathbb{Z}_N$.
Then it follows that, for a generic perturbation $V$ of small enough local strength $\lambda$, there exists a local unitary rotation $\mathcal{U}$ (commuting with all the symmetries of $\Uf$) such that $\Uf \approx \WUf$, where $\WUf = \mathcal{U} X e^{-iDT} \mathcal{U}^{\dagger}$, $D$ is a quasi-local Hamiltonian which commutes with $X$, and $\WUf$ well describes the dynamics until the almost exponentially large heating time $t_*$.

Now let us additionally assume (since we want to construct a Floquet-SxT protected by the symmetry $G$, plus time-translation) that the Floquet operator $\Uf$ is chosen such that it has the symmetry $G$. Specifically, this means that it is generated by a periodic time evolution $H(t)$ such that, for all $g \in G$, $W(g) H(t) W(g)^{-1}$,
By inspection of the explicit construction for $\mathcal{U}$ and $D$ (see Appendix \ref{sec:proofs}), it is easy to see that in this case $\mathcal{U}$ is a symmetry-respecting local unitary with respect to $W(g)$, and $D$ commutes with $W(g)$. That is, the rotation by $\mathcal{U}$ \emph{preserves} the existing symmetry $G$ as well as revealing a new $\mathbb{Z}_N$ symmetry generated by $X$ (which in the original frame was ``hidden'').

Therefore, we can choose $D$ to be a Hamiltonian whose ground state is in the \emph{stationary} SxT phase protected by $\mathbb{Z}_N \times G$. It follows (by the same arguments discussed in Ref.~\onlinecite{Else16a} for the MBL case) that the ground state $D$ will display the desired Floquet-SxT order under the time evolution generated by $\mathcal{U}^{\dagger} \Uf \mathcal{U} = Xe^{-iDT}$. Furthermore, since Floquet-SxT order is invariant under symmetry-respecting local unitaries, the ground state of $\mathcal{U} D \mathcal{U}^{\dagger}$ will display the desired Floquet-SxT order under $\Uf$.

We note, however, that topological order, in contrast to symmetry-breaking order, does not exist at nonzero temperature (in clean systems, for spatial dimensions $d < 4$). Thus, for initial state mean energies $\langle D\rangle$ that corresponds to temperatures $\beta^{-1}$ satisfying $0<\beta^{-1}\ll\Delta$, where $\Delta$ is the bulk energy gap, the system will exhibit exponentiall-small corrections $\sim e^{-\beta\Delta}$ to the quantized values
that would be observed in the ground state. This is no worse than the situation in thermal equilbirum
where, for instance, the Hall conductance is not precisely quantized in experiments, but has small
corrections $\sim e^{-\beta\Delta}$. However, preparing such an initial state will be more involved than
for a simple symmetry-breaking phase. For this reason it is more satisfactory to envision cooling the system by coupling to a thermal bath, as discussed in Section \ref{sec_open}, which is analogous to how topological phases
are observed in thermal equilibrium experiments -- by refrigeration.

\section{Open systems}
\label{sec_open}

So far, we have considered only isolated systems. In practice, of course, some coupling to the environment will always be present. One can also consider the effect of classical noise, for example some time-dependent randomness in the parameters of the drive, so that successive time steps do not implement exactly the same time evolution.
The Floquet-MBL time crystals of Ref.~\onlinecite{Else2016b} are not expected to remain
robust in such setups, since MBL will be destroyed. Since some amount of coupling to the environment is inevitable in realistic setups,
this limits the timescales over which one could expect to observe Floquet-MBL time crystals experimentally.

\begin{figure}
    \includegraphics[width=6cm]{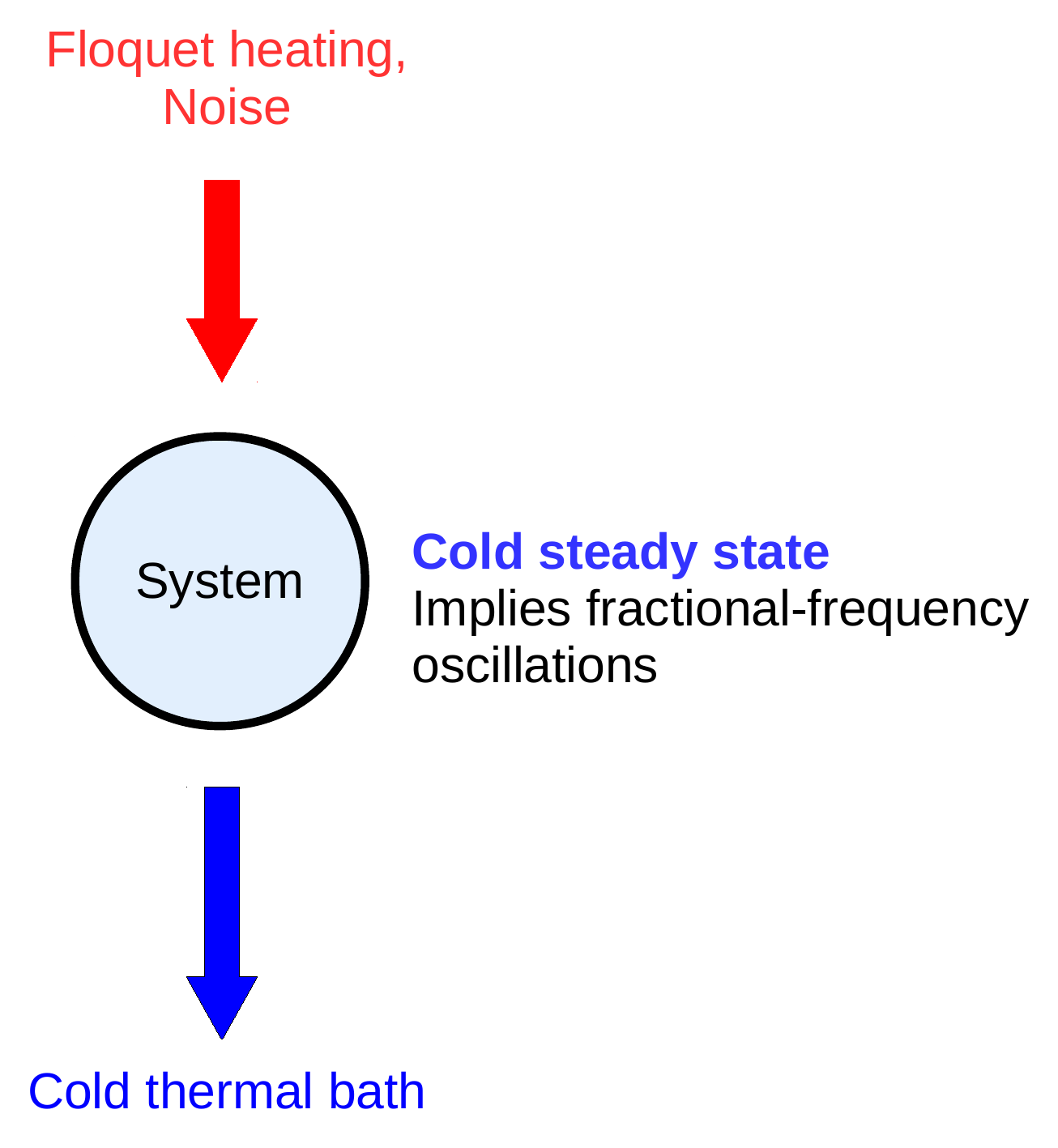}
    \caption{\label{balancing} So long as the energy inflow due to noise and
        periodic driving is balanced by the outflow to a cold thermal bath,
        giving a low-energy steady state, oscillations at a fraction of the
    drive frequency will be observed.}
\end{figure}

However, the situation could be quite different for the
pre-thermal time crystals of this work. A complete treatment is
beyond the scope of the present work, so in this section we will confine
ourselves to stating one very interesting hypothesis:
Floquet case time-crystals can actually be \emph{stabilized} in open systems
so that the oscillations actually continue \emph{forever} for \emph{any} initial state (in contrast to the case of isolated systems, in which, as discussed previously, the oscillations continue only up to some very long time, and only for some initial states). We will not attempt to establish this more rigorously, but simply discuss a plausible scenario by which this would occur. The idea, as depicted in Figure \ref{balancing}, is that the heating due to the periodic driving, as well as classical noise sources and other stray couplings to an environment, can be counteracted by cooling from a coupling to a sufficiently cold thermal bath. Provided that the resulting steady-state has sufficiently low ``energy'', we will argue that that oscillations at a fraction of the drive frequency will be observed in this steady state. Here ``energy'' means the expectation value of the effective Hamiltonian $D$ which describes the dynamics in the prethermal regime. We discuss this hypothesis further, and show that it indeed implies periodic oscillations, in Appendices \ref{appendix_open} and \ref{ssb_nonthermal}.
We also note that this argument does not apply to the continuous-time time crystals of Section \ref{sec_continuous_time_crystal}, since in that case low energy is not a sufficient condition to observe oscillations even in an isolated system; there is also a dependence on the chemical potential $\mu$.

\section{Discussion}
\label{sec_discussion}

In this paper, we have described how phases protected by time-translation
symmetry can be observed
in the pre-thermal regime of driven and undriven quantum systems. This greatly
increases the set of experimental systems in which such phases can be
observed, since, as opposed to previous proposals, we do not require many-body
localization to robustly prevent the system from heating to infinite temperature.
While many-body localization has been observed in experiments~\cite{Schreiber15,Smith2016,Choi2016},
the ideas put forward in this paper significantly reduce experimental requirements
as strong disorder is not required.

Our Theorem 1 implies that the time-translation-protected behavior (for example,
the fractional-frequency oscillations in the Floquet time crystal) can be observed to nearly exponentially-late times,
provided that the drive frequency is sufficently high. However, the rigorous
bound given in the theorem -- which requires a drive frequency $\sim 10^3$ times
larger than the local couplings in the time-dependent Hamiltonian -- may not be tight.
Therefore, it would be interesting to check numerically whether (in the Floquet
time crystal case, say) long-lived oscillations are observed
in systems with drive frequency only moderately larger than the local couplings. This may be challenging in small
systems, in which there isn't a large separation of energy scales between the local coupling strength and
the width of the many-body spectrum (which the frequency should certainly not exceed). In one-dimensional
systems, oscillations will not be observed to exponentially-long (in the drive frequency) times, but will have
a finite correlation time for any non-zero energy density initial state. However, there will be a universal quantum
critical regime in which the correlation time will be the inverse effective temperature.

Although naive application of Theorem 1 suggests that the ideal
situation is the one in which the drive frequency becomes infinitely large, in
practice very high-frequency driving will tend to excite high energy modes that
were ignored in constructing the model lattice Hamiltonian. For example, if the
model Hamiltonian describes electrons moving in a periodic potential in
the tight-binding approximation, high frequency driving would excite higher orbitals that were excluded. Thus,
the driving frequency $\Omega$ needs to be much greater than the local energy scales
of the degrees of freedom included in the model Hamiltonian (except for one
particular coupling, as discussed in Section \ref{sec_floquet_time_crystal}), but also much less than
the local energy scales of the degrees of freedom not included.
(One cannot simply include \emph{all} degrees of freedom in the model
Hamiltonian, because then the norm of local terms would be unbounded, and
Theorem 1 would not apply.)

In the case of undriven systems, we have shown that continuous time-translation symmetry
breaking can similarly occur on nearly exponentially-long time intervals even without any
fine-tuning of the Hamiltonian, provided that there is a large separation of scales in the
Hamiltonian. We show how in certain cases this can be described in terms of approximate
Goldstone bosons associated with the spontaneously-broken time-translation symmetry.

Our analysis relied on the construction of hidden approximate symmetries that
are present in a pre-thermal regime.
The analogous symmetries in MBL systems, where they
are exact, were elucidated in the interesting work of von Keyserlingk et al.~\cite{vonKeyserlingk2016a}.
In the time-translation protected phases discussed here,
the symmetry generated by the operator $\mathcal{U}^{\dagger} X \mathcal{U}$
is enslaved to time-translation symmetry since, in the absence of fine-tuning, such
a symmetry exists exists only if time-translation symmetry is present. (That is,
if we add fields to the Hamiltonian that are periodic with period $nT$ and not
period $T$, then the hidden symmetry no longer exists.)
Moreover, this symmetry is broken if and only if time-translation symmetry is broken.
(Similar statements hold in the MBL case\cite{vonKeyserlingk2016a}.)
In the Floquet time crystal case, the hidden symmetry generated by $\mathcal{U}^{\dagger} X \mathcal{U}$
acts on the order parameter at stroboscopic times in the same way as time-translation by $T$
(a single period of the drive), and therefore it does not constrain correlation functions any more than they already are
by time-translation symmetry. The same observation holds for the approximate symmetry
generated by $L_z$ in the undriven case.

However, there are systems in which time crystal behavior actually does ``piggyback'' off another
broken symmetry. This does require fine-tuning, since it is necessary to ensure that the system
posseses the ``primary'' symmetry, but such tuning may be physically natural
(e.g. helium atoms have a very long lifetime, leading to a U(1) symmetry).
The broken symmetry allows a many-body system to effectively become a few-body
system. Thus, time crystal behavior can occur in such systems for the same reason that oscillations
can persist in few-body systems. Oscillating Bose condensates (e.g. the AC Josephson effect and the
model of Ref.~\onlinecite{Sacha15}) can, thus, be viewed as fine-tuned time crystals.
They are not stable to arbitrary time-translation symmetry-respecting perturbations; a perturbation
that breaks the ``primary'' symmetry will cause the oscillations to decay.
Indeed, most few-body systems are actually many-body systems in which a spontaneously-broken
symmetry approximately decouples a few degrees of freedom. A pendulum is a system of $~10^{23}$ atoms
that can be treated as a single rigid body due to spontaneously-broken spatial translational symmetry:
its oscillations owe their persistence to this broken symmetry, which decouples the center-of-mass
position from the other degrees of freedom.

With the need for MBL obviated by pre-thermalization, we have opened up the
possibility of time-translation protected phases in open systems, in which MBL is impossible~\cite{Levi2015,Fischer2015,Nandkishore2014,Gopalakrishnan2014,Johri2015,Nandkishore2015b,Li2015,Nandkishore2016,Hyatt2016}. In fact, since the results of Appendix \ref{ssb_nonthermal} show that TTSB can occur
in non-thermal states, it is possible for the coupling to a cold bath to counteract the heating effect that would otherwise
bring an end to the pre-thermal state at time $t_*$. This raises the possibility
of time-translation protected phases that survive to infinite times in
non-equilibrium steady states; the construction of such states is an interesting avenue for future work.

\begin{acknowledgments}
We thank D.E.~Liu and T.~Karzig for helpful discussions. We thank W.W.~Ho for
    pointing out a typo in a previous version of the manuscript.
D.E. acknowledges support from the Microsoft Corporation.
Part of this work was completed at the Aspen Center for Physics,
which is supported by National Science Foundation grant PHY-1066293.
\end{acknowledgments}
\bigskip

{\it Note added:} After the submission of this paper, two experimental papers (J. Zhang {\it et al.}, arXiv:1609.08684 and S. Choi {\it et al.}, arXiv:1610.08057) have appeared with evidence
consistent with the observation of a Floquet time crystal. We note that the J. Zhang {\it et al.} paper
implements disorder by addressing each ion sequentially. A pre-thermal version of this experiment would
not need disorder, thereby sidestepping this bottleneck standing in the way of experiments on larger systems.
The Choi {\it et al.} paper occurs in a system that is unlikely to be many-body localize, and therefore
occurs during a slow approach to equilibrium. This is unlikely to correspond to a prethermal regime, but the approximate short-time form of the time evolution entailed in our Theorem 1 might still be relevant to understanding the results.

\appendix
\section{Rigorous proof of pre-thermalization results}
\label{sec:proofs}
\subsubsection{Definition of the norm}
\label{sec:norm}

Let's suppose, for the sake of concreteness, that we have a
spin system with a local time-dependent Hamiltonian of the form:
\begin{align}
H(t) &= \sum_{i,j} J_{i,j}^{\alpha\beta}(t) {S^\alpha_i}  {S^\beta_j} +
\sum_{i,j,k } K_{i,j,k}^{\alpha\beta\gamma}(t) {S^\alpha_i}{S^\beta_j}{S^\gamma_k}
+ \ldots\cr
& = \sum_p \sum_{p-\text{tuples}} A_{{i_1},\ldots,{i_p}}
\end{align}
Here $\alpha = x,y,z$ are the components of the spins, and $i,j,k$ are lattice sites.
In the first line, we have explicitly written the $2$-site and $3$-site terms; the $\ldots$ represents
terms up to $n$-site terms, for some finite $n$. It is assumed that these interactions
have finite range $r\geq n$ such that all of the sites in a $k$-site term are within distance $r$.
In the second line, we have re-expressed the Hamiltonian in a more generic form
in terms of $p$-site terms $A_{{i_1},\ldots,{i_p}}$ with ${i_1}\neq \ldots \neq {i_p}$.
To avoid clutter, we have not explicitly denoted the $t$-dependence of $A_{{i_1},\ldots,{i_p}}$.
We define the local instantaneous norm $\|A_{{i_1},\ldots,{i_p}}\|_n$ according to
\begin{equation}
\|A_{{i_1},\ldots,{i_p}}\|^\text{inst}_n \equiv e^{p \kappa_n} \|A_{{i_1},\ldots,{i_p}}\| 
\end{equation}
where $\|A_{{i_1},\ldots,{i_p}}\|$ is the operator norm of $A_{{i_1},\ldots,{i_p}}$
at a given instant of time $t$ and 
\begin{equation}
\label{kappan}
{\kappa_n} \equiv {\kappa_1}/[1+\ln n].
\end{equation}
 We make this
choice of $n$-dependence of $\kappa_n$,
following Ref.~\onlinecite{Abanin2015a} for reasons that will be clear later.
We then average the instantaneous norm over one cycle of the drive:
\begin{equation}
\|A_{{i_1},\ldots,{i_p}}\|_n \equiv \frac{1}{T} \int_0^T dt\,  \|A_{{i_1},\ldots,{i_p}}\|^\text{inst}_n 
\end{equation}
It is only in this step that we differ from Abanin et al. \cite{Abanin2015a},
who consider the supremum over $t$ rather than the average.
In analyzing the Floquet operator, i.e. the evolution due to $H$ at stroboscopic times,
it is the total effect of $H$, which is determined by its integral over a cycle, that concerns us.
Error terms that act over a very short time, even if they are relatively strong, have little effect on
the Floquet operator so long as their norm, as defined above, is small.
Finally, we define the global time-averaged norm of the Hamiltonian $H$:
\begin{equation}
\|H\|_n \equiv \sup_{j} \sum_p \sum_{p-\text{tuples}} \Bigl[\sum_k \delta_{j,i_k}\Bigr] \|A_{{i_1},\ldots,{i_p}}\|_n \,
\end{equation}
The term in square braces restricts the sum to $p$-tuples that contain the site $j$.

\subsubsection{More technical statement of Theorem \ref{mainthm}}
Theorem \ref{mainthm} stated above will follow from the following slightly more technical formulation. For notational simplicity we work in units with $T=1$.
\begin{thmbis}{mainthm}
 Consider a periodically-driven system with Floquet operator:
\begin{equation}
    U_\text{f} = \mathcal{T}\!\exp\left(-i\int_0^T [H_0(t) + V(t)] dt\right),
\end{equation}
where $X \equiv \mathcal{T} \exp\left(-i\int_0^T H_0(t)\right)$ satisfies $X^N =
1$ for some integer $N$, and we assume that $H_0$ can be written as a sum
$H_0(t) = \sum_i h_i(t)$ of terms acting on single sites $i$. Define $\lambda \equiv \| V \|_1$.
Then there exists a sequence of quasi-local $A_n$ such that, defining $\mathcal{U}_n = e^{-iA_n} \cdots e^{-iA_1}$, we have
\begin{equation}
\mathcal{U}_n \Uf\, \mathcal{U}_n^{\dagger} =
X \,\mathcal{T}\!\exp\left(-i\int_{0}^{1} [D_n + E_n + V_n(t)] dt\right),
\end{equation}
where $[D_n,X]=0$; $D_n, E_n$ are independent of time; and
\begin{align}
\label{bounds1}
\| V_n \|_n, \| E_n \|_n &\leq 2 K_n \lambda^n, \\
\| A_n \|_n &\leq (N+1) K_n \lambda^n, \label{bounds2} \\
\| D_n - D_{n-1} \|_n &\leq K_n \lambda^n, \label{bounds3}
\end{align}
where we have defined $\lambda \equiv \|V\|_1$, and
\begin{multline}
K_n = C^{n-1} \prod_{k=1}^{n-1} m(k), \quad C = 2(N+3)(N+4),
\\ m(n) = \frac{18}{\kappa_{n+1}(\kappa_n - \kappa_{n+1})}.
\end{multline}
These bounds hold provided that $n \leq n_*$, with
\begin{equation}
n_* = \frac{\lambda_0/\lambda}{[1+\log(\lambda_0/\lambda)]^3}, \quad \lambda_0 = (36C)^{-1}
\end{equation}
and provided that
\begin{equation}
\lambda < \frac{\mu}{N+3}, \quad \mu \approx 0.07.
\end{equation}
\end{thmbis}
Theorem \ref{mainthm} follows from Theorem \ref*{mainthm}$'$, because $n_*$ is chosen such that $n \leq n_*$ implies $C m(n) \leq \frac{1}{2\lambda}$. It then follows that $K_{n+1} \lambda^{n+1}/(K_n \lambda^n) = C m(n) \lambda \leq \frac{1}{2}$, and hence that $K_n \lambda^n \leq \lambda/2^{n-1}$. Moreover, we obtain \eqnref{D_first_order} by summing \eqnref{bounds3}, from which we see that $\| D_n - D_1 \|_n \leq \sum_{k=2}^{\infty} K_k \lambda^k \leq K_2 \lambda^2 \sum_{k=2}^{\infty} \left(\frac{1}{2}\right)^{k-2} = 2K_2 \lambda^2 = 2C m(1) \lambda^2 \approx 2.9 \lambda^2/\lambda_0$. (Here we use the fact that $\| \cdot \|_{n+1} \leq \| \cdot \|_n$.)

In the next subsections, we will give a proof of Theorem \ref*{mainthm}$'$.

\subsubsection{Iterative construction}
\label{sec:recusion}
The idea is to construct the $D_n,V_n,E_n,A_n$ discussed above iteratively. That is, suppose that at the $n$-th step, we have 
\begin{equation}
\mathcal{U}_n \Uf \,\mathcal{U}_n^{\dagger} \equiv \Uf^{(n)} =
X \,\mathcal{T}\!\exp\left(-i\int_{0}^{1} \mathcal{H}_n(t) dt\right),
\end{equation}
where $\mathcal{H}_n(t) = F_n + V_n(t)$, with $F_n =  \int_0^T \mathcal{H}_n(t) dt$ time-independent. We will choose to separate the time-independent piece $F_n$ according to $F_n = D_n + E_n$, where $D_n = \langle F_n \rangle$, and we have defined the symmetrization
\begin{equation}
\langle O \rangle = \frac{1}{N} \sum_{k=0}^{N-1} X^{-k} O X^k.
\end{equation}
In particular, this implies that $[D_n,X] = 0$ and $\langle D_n \rangle = D_n$, and therefore 
$\langle E_n \rangle = \langle F_n \rangle - \langle D_n \rangle = D_n - D_n = 0$.

We will now introduce a local unitary $\mathcal{A}_n = e^{-iA_n}$, which we use to rotate the Floquet operator $\Uf^{(n)}$, giving a new Floquet operator 
\begin{equation}
\Uf^{(n+1)} \equiv \mathcal{A}_n \Uf^{(n)} \mathcal{A}_n^{\dagger} = X \mathcal{T} \exp\left(-i\int_0^1 \mathcal{H}_{n+1}(t) dt\right).
\end{equation}
The ultimate goal, decomposing $\mathcal{H}_{n+1}(t) = D_{n+1} + E_{n+1} + V_{n+1}(t)$ as before, is to ensure that the residual error terms $E_{n+1}$ and $V_{n+1}$ are much smaller than $E_n$ and $V_n$. This goal is achieved in two separate steps. The first step ensures that $E_{n+1}$ is small (that is, the time-independent part of $\mathcal{H}_{n+1}(t)$ nearly commutes with $X$), and the second step ensures that $V_{n+1}$ is small.

\emph{Step One}.-- This step proceeds similarly to the recursion relation of Abanin et al \cite{Abanin2015a} for the \emph{time-independent} case (Section 5.4 of Ref. \onlinecite{Abanin2015a}). There the recursion relation was designed to make the Hamiltonian commute with its zero-th order version. This is analogous to our present goal of making the Floquet operator commute with $X$. Here, we adapt the analysis of Ref.~\onlinecite{Abanin2015a} to the Floquet case.

We observe that
\begin{align}
\Uf^{(n+1)} &= 
\mathcal{A}_n \Uf^{(n)} \mathcal{A}_n^{\dagger} \\
&= X \left[X^{\dagger} \mathcal{A}_n X \times \mathcal{T} \exp\left(-i\int_0^1 \cH_n(t) dt\right) \times \mathcal{A}_n^{\dagger}\right], \\
&= X \left[e^{-X^{\dagger} iA_n X} \times \mathcal{T} \exp\left(-i\int_0^1 \cH_n(t) dt\right) \times e^{iA_n}\right] \\
&= X \times \mathcal{T} \exp\left(-i\int_0^1 \cH_n^{\prime}(t) dt\right),
\end{align}
where
\begin{equation}
\label{hp}
\cH_n^{\prime}(t) = \begin{cases} \frac{1}{a} (-A_n) & 0 \leq t \leq a \\
    \frac{1}{1-2a} \cH_n\left( \frac{t-a}{1-2a} \right) & a\leq t \leq (1-a), \\
                              \frac{1}{a}(X^{\dagger} A_n X) & (1-a) \leq t \leq 1,
\end{cases}
\end{equation}
(for some constant $a \in [0,1/2]$ which can be chosen arbitrarily.)
Let us decompose $\cH_n^{\prime}(t) = D^{\prime}_n+ V^{\prime}_n(t)$, where $D^{\prime}_n = \frac{1}{T}\int_0^1 \cH^{\prime}_n(t)$. Our goal will be to ensure that the time-independent part $D_n^{\prime}$ commutes with $X$. It turns out this can actually be achieved exactly, and in particular we can choose $A_n$ such that $D_n^{\prime} = D_n$.

To this end, we first observe that
\begin{equation}
D^{\prime}_n = D_n + E_n + X^{\dagger} A_n X - A_n.
\end{equation}
We now claim that $D_n^{\prime} = D_n$ if we choose
\begin{equation}
\label{theA}
A_n := \frac{1}{N}\sum_{k=0}^{N-1} \sum_{p=0}^k E^{(p)}_n, \quad E^{(p)}_n = X^{-p} E X^{p}.
\end{equation}
To see this, note that, by construction,
\begin{align}
X^{\dagger} A_n X - A_n &= \frac{1}{N}\sum_{k=0}^{N-1} \sum_{p=0}^k [E^{(p+1)}_n - E^{(p)}_n] \\
                    &= \frac{1}{N}\sum_{k=0}^{N-1} [E^{(k+1)}_n - E_n] \\
                    &= -E_n + \langle E_n \rangle, \\
                    &= -E_n,
\end{align}
since $\langle E_n \rangle = 0$.

\emph{Step Two}.--
The next step is now to find a new time-dependent Hamiltonian $\cH_{n+1}(t)$ which gives the same unitary evolution as $\cH_{n}^{\prime}(t)$ over the time interval $[0,1]$, while making the time-dependent part smaller. That is, making the decomposition $\cH_{n+1}(t) = D_{n+1} + E_{n+1} + V_{n+1}(t)$ as before, the goal is to make $V_{n+1}$ small. In fact, this is precisely the problem already considered by Abanin et al\cite{Abanin2015a}, and we can use the procedure described in Section 4.1 of that paper.

One might worry whether Step Two undoes the good work done by Step One. That is, does making $V_{n+1}$ small come at the cost of making $E_{n+1}$ larger again? However, this turns out not to be a problem, as the bounds we derive below will make clear.

\subsubsection{Bounds on Error terms}
\label{sec:bounds}

Now we will derive bounds that quantify the success of the iterative procedure described in the previous subsection at making the residual error terms $E_n$ and $V_n$ small. Analysis proceeds in similar way to Abanin et al\cite{Abanin2015a}.
 We define
\begin{multline}
d(n) = \| D_n \|_n,  \quad v(n) = \| V_n \|_n, \quad v'(n) = \| V^{\prime}_n \|_{n}, \\ e(n) = \| E_n \|_n, \quad \delta d(n) = \| D_{n+1} - D_n \|_{n+1}, 
\end{multline}

First of all, from \eqnref{theA} we have a bound on $A_n$:
\begin{equation}
\label{Anbound}
\| A_n \|_n \leq \frac{N+1}{2} e(n)
\end{equation}
From \eqnref{hp} we observe that
\begin{equation}
\label{vp}
V_n^{\prime}(t) = \begin{cases} \frac{1}{a} (-A_n) - D_n & 0 \leq t \leq a \\
    \frac{1}{1-2a}\left[ 2a D_n + E_n + V_n\left(\frac{t-a}{1-2a}\right) \right] & a\leq t \leq (1-a), \\
                              \frac{1}{a}(X^{\dagger} A_n X) - D_n & (1-a) \leq t \leq 1,
\end{cases}
\end{equation}
and hence
\begin{align}
v'(n) &\leq 2 \| A_n \|_n + \|E_n\|_n + \| V_n \|_n + 4a \| D_n \|_n
\end{align}
Hence, we can send $a \to 0$ to give (using \eqnref{Anbound})
\begin{align}
\label{oslo}
v'(n) &\leq (N+2) e(n) + v(n).
\end{align}
Then, as our construction of $\cH_{n+1}$ from $\cH_n^{\prime}$ is the one described in Section 4.1 of Abanin et al, we can use their bounds
\begin{align}
\label{abanin_bounds}
\| D_{n+1} + E_{n+1} - D_n \|_{n+1} &\leq \epsilon_n/2 \\
\label{copenhagen}
v(n+1) &\leq \epsilon_n
\end{align}
where 
\begin{equation}
\label{stockholm}
\epsilon_n =
 m(n) v'(n) \bigl(d(n) + 2 v'(n)),
\end{equation}
\begin{equation}
\label{mn}
m(n) = \frac{18}{(\kappa_{n+1} - \kappa_n)\kappa_{n+1}}.
\end{equation}
These bounds hold provided that 
\begin{equation}
\label{brussels}
3 v'(n) \leq \kappa_{n} - \kappa_{n+1}
\end{equation}
Since $D_{n+1} - D_n = \langle D_{n+1} + E_{n+1} - D_n \rangle$, we see that
\begin{equation}
\delta d(n) \leq \| D_{n+1} + E_{n+1} - D_n \|_{n+1} \leq \epsilon_n/2
\end{equation}
and
\begin{equation}
e(n+1) \leq \| D_{n+1} + E_{n+1} - D_n \|_{n+1} + \| D_{n+1} - D_n \|_{n+1} \leq \epsilon_n
\end{equation}

\subsubsection{Proof of Theorem \ref*{mainthm}$'$ by induction}
The idea now is to apply the bounds of the previous subsection recursively to give bounds expressed in terms of the original Floquet operator,
\begin{align}
    \Uf = \Uf^{(1)} &= \mathcal{T} \exp\left(-i\int_0^1 [H_0(t) + V(t)] \right) 
    \\ &= X \mathcal{T} \exp\left(-i\int_0^1 V_{\mathrm{int}}(t) dt\right),
\end{align}
and in particular the quantity $\lambda \equiv \| V_{\mathrm{int}} \|_1 = \| V \|_1$.
First of all, we write $\cH_1(t) \equiv V_{\mathrm{int}}(t) = F_1 + V_1(t)$, where $F_1 = \int_0^1 V_{\mathrm{int}}(t) dt$, and then separate $F_1 = D_1+ E_1$, where $D_1 = \langle F_1 \rangle$. We note that $\|F_1\|_1 \leq \lambda$, which implies that $v(1) \leq \| V_{\mathrm{int}} \|_1 + \|F_1\|_1 \leq 2\lambda$, and $d(1) \leq \lambda$. In turn this gives $e(1) \leq \|D_1\|_1 + \|F_1\|_1 \leq 2\lambda$.

Now we proceed by induction. Suppose that we have some $n$ such that, for all $1 \leq k \leq n$, we have
\begin{equation}
e(k), v(k) \leq 2K_k \lambda^k,
\end{equation}
and for all $1 \leq k < n$,
\begin{equation}
\delta d(k) \leq K_{k+1} \lambda^{k+1}
\end{equation}
where the coefficients $K_k$ satisfy $K_{k+1}/K_k \leq \frac{1}{2\lambda}$.
(The preceding discussion shows that this induction condition is satisfied for $n=1$ with $K_1=1$.)

Then from \eqnref{oslo} we find that
\begin{equation}
v^{\prime}(n) \leq 2c_N  K_n \lambda^{n}, \quad c_N = N+3,
\end{equation}
and hence
\begin{equation}
\label{berlin}
\epsilon_n \leq m(n) 2c_N K_n \lambda^n ( d(n) + 2c_N K_n \lambda^n).
\end{equation}
We note that the triangle inequality and the fact that $\|\cdot\|_n$ decreases with $n$ ensures that $d(n+1) - d(n) \leq \delta d(n)$. Hence we can bound $d(n)$ by
\begin{align}
d(n) &\leq d(1) + \sum_{k=1}^{n-1} \delta d(k)  \\
     &\leq \lambda  + \sum_{k=1}^{n-1} K_{k+1} \lambda^{k+1} \\
     &= \sum_{k=1}^n K_k \lambda^k \\
     &\leq \sum_{k=1}^n \lambda \left(\frac{1}{2}\right)^{k-1} \label{helsinki} \\
     &\leq 2 \lambda
\end{align}
In \eqnref{helsinki}, we used the inequality $K_{k+1}/K_k \leq 1/(2\lambda)$. This same inequality also ensures that $K_n \lambda^n \leq \lambda$, so inserting into \eqnref{berlin} gives
\begin{align}
\epsilon_n &\leq m(n) 2c_N K_n (2 + 2c_N)\lambda^{n+1}\cr
&\equiv 2C m(n) K_n \lambda^{n+1}\cr
& \equiv K_{n+1} \lambda^{n+1}.
\end{align}
Here we chose
\begin{equation}
K_{n+1} = C m(n) K_n, \quad C = 2c_N(1+c_N).
\end{equation}
Next we need to examine the conditions under which \eqnref{brussels} holds. Given the bounds on $v'(n)$ and using the inequality $K_n \lambda^n \leq \lambda (1/2)^{n-1}$, it is sufficient to demand that
\begin{equation}
3c_N (1/2)^{n-1} \lambda \leq \kappa_{n+1} - \kappa_n,
\end{equation}
or in other words
\begin{equation}
\label{paris}
\lambda \leq \frac{1}{3c_N} \max_{n \in \mathbb{N}} \left[2^{n-1} (\kappa_{n+1}
- \kappa_n) \right] = \frac{1}{3c_N}(\kappa_2 - \kappa_1) \approx
\frac{0.14\kappa_1}{N+3}.
\end{equation}
Provided that \eqnref{paris} holds, we then find that 
\begin{align}
\delta d(n), v(n+1)/2, e(n+1)/2 \leq  K_{n+1} \lambda^{n+1}.
\end{align}
Therefore, we can continue the induction provided that $K_{n+1}/K_n \leq \frac{1}{2\lambda}$. Since $K_{n+1}/K_n = C m(n)$, this is true provided that $n \leq n_*$. This completes the proof of Theorem \ref*{mainthm}$'$.

\section{Proof of phase-winding when a $\U(1)$ symmetry is spontaneously broken}
\label{phase_winding}

Here we intend to prove the claim made in Section \ref{continous_basic_picture} above that the expectation value
\begin{equation}
\mathrm{Tr}(\rho_{X} e^{itK} \Phi e^{-itK}) \equiv g_X(t)
\end{equation}
must be independent of time $t$, where we have defined $K \equiv D - \mu L$ and $\rho_X \equiv \lim_{\epsilon\to0^{+}} \frac{1}{\mathcal{Z}} e^{-\beta (K + \epsilon X)}$. The idea is to make a connection with results of Ref.~\onlinecite{Watanabe15}; however, these  were expressed in terms of \emph{two-point} correlation functions, and also did not have the $\epsilon X$ term in the definition of the density matrix. To make a connection, we assume that the symmetric density matrix $\rho = \frac{1}{\mathcal{Z}} e^{-\beta K}$ can be recovered by symmetrizing a symmetry-breaking state,
\begin{equation}
  \rho = \frac{1}{2\pi} \int_0^{2\pi} e^{-i\theta L} \rho_X e^{i\theta L}
  d\theta,
\end{equation}
and that the symmetry-breaking state $\rho_X$ is short-range correlated. Now we calculate the two-point correlation function (where $\Phi(x)$ and $\Phi(y)$ are two operators acting at different spatial locations $x$ and $y$)
\begin{widetext}
\begin{align}
f(t) &= \mathrm{Tr}[\rho e^{itK} \Phi(x) e^{itK} \Phi^{\dagger}(y)] \\
     &= \frac{1}{2\pi} \int_0^{2\pi} d\theta \mathrm{Tr} [e^{-i\theta L} \rho_X e^{i\theta L} e^{itK} \Phi(x) e^{-itK} \Phi(y)] \\
     &= \frac{1}{2\pi} \int_0^{2\pi} d\theta \mathrm{Tr} [\rho_X e^{itK} \{
     e^{i\theta L} \Phi(x) e^{-i \theta L}\} e^{-itK} \{ e^{i\theta L} \Phi^{\dagger}(y) e^{-i\theta L}\}] \\
     &= \mathrm{Tr} [\rho_X \{ e^{-itK} \Phi(x) e^{itK} \} \Phi^{\dagger}(y) \}] \\
     &= g_X(t) [g_X(0)]^{*},
\end{align}
\end{widetext}
where we used the fact that $L$ and $K$ commute and that $e^{i\theta L} \Phi e^{-i\theta L} = e^{i\theta} \Phi$. In the last line we sent $|x - y| \to \infty$ and used the assumption that $\rho_X$ has short-range correlations.

Now, the theorem of Ref.~\onlinecite{Watanabe15} rigorously proves that the function $f(t)$ must be independent of time. Hence, unless $g_X(0) = 0$, we conclude that $g_X(t)$ must be independent of time. (If $g_X(0) = 0$ but $g_X(t)$ is not independent of time then there must be some $t$ such that $g_X(t) \neq 0$. Then we can just relabel the time-coordinate so that $g_X(0) \neq 0$ and repeat the argument.)

\section{Open systems}
\label{appendix_open}
In this section, we will elaborate on our hypothesis for open systems introduced in Section \ref{sec_open} above, namely that in a large class of systems the steady state will have low energy.
First we need to clarify what we mean by ``energy'' and ``steady state'' in the
Floquet context. Let $H_S(t)$ be the time-evolution of the system alone (not
taking to account the coupling to the environment.) We define the Floquet
operator $\Uf = \mathcal{T} \exp\left(-i\int_0^T H_S(t) dt\right)$. Recall that
in the regime discussed in Section \ref{sec_floquet_time_crystal}, where $\lambda$ as defined there
satisfies $\lambda T \ll 1$, we can write $H_S(t) = \widetilde{H}_S(t) + V(t)$. Here $V(t)$ is a very weak residual perturbation, and $\widetilde{H}_S(t)$ is such that, if we define the approximate Floquet operator by $\widetilde{\Uf} = \mathcal{T} \exp\left(-i\int_0^T \widetilde{H}_S(t)\right)$, then it can be expressed, following a local unitary time-independent change of basis (which we will here set to $1$ for notational simplicity), as $\widetilde{\Uf} = X e^{-iDT}$, where $X^2 = 1$ and $D$ is a quasi-local Hamiltonian $D$ that commutes with $X$. In particular, we have $\widetilde{\Uf}^2 = e^{-2iDT}$. This implies that we can make a time-\emph{dependent} local unitary change of basis $W(t)$, periodic with period $2T$ and satisfying $W(0) = 1$, such that the transformed Hamiltonian, which is related to $\widetilde{H}_S(t)$ according to
\begin{equation}
    \widetilde{H}_S' = W H_S W^{\dagger} + i [\partial_t W] W^{\dagger},
\end{equation}
is time-independent and equal to $D$.
Therefore, in this new reference frame, it is clear that we should refer to the expectation value of $D$ as ``energy''. We emphasize that we have not gotten rid of the time-dependence completely: even in the new reference frame the residual driving term $V(t)$, as well as any couplings to the environment, will still be time-dependent. (Due to the time-dependent change of basis, the latter will gain a time-dependence even if it was originally time-independent.)

The steady state is now determined by some balance between the residual periodic
driving $V(t)$, the classical noise, and the coupling to the environment. We
leave a detailed analysis of this open system process for future
work\footnote{For one study of steady states of many-body Floquet systems
coupled to a bath, see Ref.~\onlinecite{Shirai2015}},
but we expect that in a suitable regime the energy-density of the steady state will be low. We will now explain why this implies oscillations (which are observed in the \emph{original} reference frame, not the rotating one defined above.)

Consider a short-range correlated steady state $\rho$ whose energy density with respect
to $D$ is small. Recall that in Section \ref{floquet_basic_picture} we argued that if $\rho$ is a
thermal state it must spontaneously break the symmetry generated by $X$, and it
follows that under $\widetilde{\Uf}$ it oscillates at twice the drive
frequency. Of course, for an open system the steady-state need not be thermal,
and time evolution of the open system is not exactly given by $\widetilde{\Uf}$.
However, as we prove in Appendix \ref{ssb_nonthermal}, even non-thermal states must fail to be
invariant under the symmetry $X$ if their energy density with respect to $D$ is
sufficiently small, provided that they satisfy a physically reasonable
``thermalizability'' condition. Moreover, if $\lambda T \ll 1$ (so that we can
approximate $\widetilde{\Uf} \approx X$), and the coupling
to the environment sufficiently weak, then the resulting state after one time
period is approximately given by $X \rho X^{\dagger}$, which by the preceding
discussion is \emph{not} the same as $\rho$. (We make this argument more precise
in Section \ref{ssb_nonthermal}.) Thus, provided that the energy of the
steady-state is sufficiently small, it does not return to itself after one time
period, and oscillations with period $2T$ will be observed.

Generic baths will destroy continuous-time time crystals. The difference with the discrete-time case is the existence of an extra variable characterizing thermal states of $D$; namely, the chemical potential $\mu$. This extra variable is needed because of the presence of the hidden $U(1)$ symmetry in the continuous-time regime. (There is no analogous variable when the hidden symmetry is \emph{discrete}). Thus, one certainly cannot make any statement that all low-energy states of $D$ oscillate, because, in particular, a \emph{thermal} state of $D$ in which the electrochemical potential $\mu-u=0$ does not oscillate. A coupling to a generic bath will not preserve the hidden $U(1)$ symmetry, and thus to the extent that the steady state of an open system process is close to a thermal state of $D$, we in fact expect it to have $\mu-u=0$, since this corresponds to minimizing the free energy.

In principle, one could fine-tune the bath so that it repects the symmetry. This would
allow the time crystal to survive, but is clearly contrived. One might wonder whether the
bath itself could also pre-thermalize: if we could consider the bath to be included in the Hamiltonian
(\ref{eqn:cTTSB-Hamiltonian}) then it could have an approximate U(1) symmetry along with the
rest of the system. This would require the local terms in the bath Hamiltonian to be much smaller than the
coupling $u$ in Eq. (\ref{eqn:cTTSB-Hamiltonian}). However, for most of the physically relevant baths that one would want to consider (for example, phonons), the local terms in the bath Hamiltonian are in fact unbounded.

\section{Spontaneous symmetry breaking for non-thermal states}
\label{ssb_nonthermal}
Let $D$ be a quasi-local Hamiltonian for which the thermal states spontaneously break an on-site $\mathbb{Z}_N$ symmetry generated by $X$ for energy densities $e < e_c$. More precisely, what we mean is the following, where we define the local distance between two states on a region $A$ according to
\begin{equation}
\| \rho_1 - \rho_2 \|_{A} = \| (\rho_1)_A - (\rho_2)_A \|_1
\end{equation}
where $\| \cdot \|_1$ is the trace norm, and $(\rho)_A = \mathrm{Tr}_{A^c} \rho$ is the reduced state of $\rho$ on $A$.
\begin{assumption}[Spontaneous symmetry-breaking]
    \label{assumption_ssb}
There exists some finite region $A$ and some $\gamma > 0$, such that, for any short-range correlated thermal state $\rho_\tau$ with energy density $e < e_c$, we have $\| \rho_\tau - X^k \rho_\tau X^{-k} \|_A \geq \gamma$ for all $0 < k < N$.
\end{assumption}

Now let $\rho$ be \emph{any} state (not necessarily thermal) such that the energy density $\epsilon \equiv \langle D \rangle_{\rho}/V < \epsilon_c$ (with $V$ the volume of the system.) We assume the following \emph{thermalizability} condition, which roughly states that $\rho$ can thermalize when time-evolved under $D$. More precisely:
\begin{assumption}[Thermalizability]
    \label{assumption_thermalizability}
There exist a time $t_1$ and a short-range correlated thermal state $\rho_\tau$ with the same energy density as $\rho$, such that $\| \rho(t_1) - \rho_\tau \|_A \leq \gamma/8$, where $\rho(t) = e^{-iDt_1} \rho e^{iDt_1}$.
\end{assumption}

From Assumptions \ref{assumption_ssb} and \ref{assumption_thermalizability} we derive the following lemma, which quantifies the sense in which the state $\rho$ must break the symmetry.
\begin{lem}
There exists a finite region $A'$ such that $\| \rho - X^k \rho X^{-k} \|_{A^{\prime}} \geq 3\gamma/4$.
\end{lem}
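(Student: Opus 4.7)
The plan is to first combine the two assumptions on the time-evolved state $\rho(t_1)$ via a triangle inequality, and then transport the resulting local bound back to $\rho$ itself using a Lieb--Robinson estimate for the quasi-local Hamiltonian $D$. The region $A'$ will be taken to be $A$ enlarged by a buffer of radius $r \sim v\,t_1$, where $v$ is the Lieb--Robinson velocity of $D$; this enlargement is finite because $t_1$ is fixed by the thermalizability assumption.

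First I would observe that because $X = \bigotimes_i X_i$ is on-site, the local distance $\|\cdot\|_A$ is invariant under conjugation by $X^k$: indeed $(X^k\rho X^{-k})_A = X^k_A\,\rho_A\,X^{-k}_A$, and the trace norm is preserved by unitaries. In particular Assumption~\ref{assumption_thermalizability} also implies $\|X^k\rho(t_1) X^{-k} - X^k\rho_\tau X^{-k}\|_A \leq \gamma/8$. Combining this with Assumption~\ref{assumption_ssb} through the triangle inequality yields
\begin{equation*}
\|\rho(t_1) - X^k\rho(t_1) X^{-k}\|_A \;\geq\; \gamma - 2\cdot\tfrac{\gamma}{8} \;=\; \tfrac{3\gamma}{4}.
\end{equation*}

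Next I would invoke Lieb--Robinson bounds: since $D$ is quasi-local and $\rho(t_1) = e^{-iDt_1}\rho\,e^{iDt_1}$, there exists a unitary $V$ supported on $A' \equiv A+r$ (for example $V = e^{-iD_{A'}t_1}$, where $D_{A'}$ is the restriction of $D$ to terms contained in $A'$) satisfying $\|e^{iDt_1} O\,e^{-iDt_1} - V^\dagger O V\| \leq \epsilon_{\mathrm{LR}}(r)\,\|O\|$ for every $O$ supported on $A$, with $\epsilon_{\mathrm{LR}}(r)\to 0$ as $r\to\infty$. Because $[D,X]=0$ and $X$ is on-site, we can arrange that $[V,X^k]=0$ by grouping terms of $D$ into $X$-invariant clusters before truncating. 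Duality between trace norm on states and operator norm on observables gives $\|(\rho(t_1))_A - (V\rho V^\dagger)_A\|_1 \leq \epsilon_{\mathrm{LR}}(r)$, with the analogous bound for the $X^k$-conjugated state, so a second triangle inequality yields
\begin{equation*}
\|(V\rho V^\dagger)_A - (V X^k\rho X^{-k} V^\dagger)_A\|_1 \;\geq\; \tfrac{3\gamma}{4} - 2\epsilon_{\mathrm{LR}}(r).
\end{equation*}

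Finally, because $V$ acts only on $A'\supseteq A$, we have $(V\rho V^\dagger)_A = \mathrm{Tr}_{A'\setminus A}[V_{A'}\rho_{A'}V^\dagger_{A'}]$; since partial trace is contractive and conjugation by a unitary preserves trace norm, the left-hand side above is bounded by $\|\rho_{A'} - X^k_{A'}\rho_{A'} X^{-k}_{A'}\|_1 = \|\rho - X^k\rho X^{-k}\|_{A'}$. Choosing $r$ large enough that $\epsilon_{\mathrm{LR}}(r)$ is absorbed into the available slack (or by slightly strengthening $\gamma/8$ to $\gamma/8-\delta$ in Assumption~\ref{assumption_thermalizability}) completes the proof. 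The main obstacle is the Lieb--Robinson step: one must verify that $D$ is sufficiently local for a light cone to exist and, crucially, that the truncation to $A'$ can be made $X$-equivariant so that the same unitary $V$ simultaneously approximates the dynamics of $\rho$ and of $X^k\rho X^{-k}$; the on-site form of $X$ together with $[D,X]=0$ is what makes this possible.
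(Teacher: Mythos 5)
Your proof is correct and rests on the same two ingredients as the paper's: first the triangle inequality combining the symmetry-breaking and thermalizability assumptions to get $\| \rho(t_1) - X^k \rho(t_1) X^{-k} \|_A \geq 3\gamma/4$, then a Lieb--Robinson argument to pull this back to $\rho$ at the cost of enlarging $A$ to a finite $A'$. The difference is in how the second step is executed: the paper works in the Heisenberg picture, extracting a witness observable $\hat{o}_A$ from the dual characterization of the trace norm and truncating its time evolution $\hat o(t_1)$ to an operator $\hat O_{A'}$ supported on $A'$, whereas you work in the Schr\"odinger picture, truncating the evolution unitary itself to $V$ supported on $A'$ and comparing reduced density matrices. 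Both are legitimate standard consequences of the Lieb--Robinson bound and buy essentially the same thing. Two remarks. First, your insistence that $V$ be chosen with $[V,X^k]=0$ is unnecessary: the approximation $\| e^{iDt_1} O e^{-iDt_1} - V^\dagger O V \| \leq \epsilon_{\mathrm{LR}}(r)\|O\|$ is a statement about observables and therefore controls $\|(\sigma(t_1))_A - (V\sigma V^\dagger)_A\|_1$ uniformly over \emph{all} states $\sigma$, in particular for $\sigma = X^k\rho X^{-k}$ (using only $[D,X]=0$ to identify $X^k\rho(t_1)X^{-k}$ with the time evolution of $X^k\rho X^{-k}$); so the same $V$ automatically works for both states with no equivariant truncation needed. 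Second, your final bound $3\gamma/4 - 2\epsilon_{\mathrm{LR}}(r)$ falls short of the stated $3\gamma/4$, but the paper's own proof has exactly the same feature --- it fixes the truncation error at $\gamma/8$ and concludes only $\gamma/2$, which is the constant actually used in the subsequent theorem --- so absorbing $2\epsilon_{\mathrm{LR}}(r) \leq \gamma/4$ as you propose reproduces precisely what the paper proves and uses.
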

\begin{proof}
From the triangle inequality it follows that
\begin{widetext}
\begin{align}
& \| \rho(t_1) - X^k \rho(t_1) X^{-k} \|_A \\
& \geq \| \rho_\tau - X^k \rho_\tau X^{-k} \|_A - \| \rho(t_1) - X^k \rho(t_1) X^{-k} - (\rho_\tau - X^k \rho_\tau X^{-k}) \|_A \\
& \geq \gamma - 2\gamma/8\\
&= 3\gamma/4.
\end{align}
\end{widetext}

Using the characterization of the trace norm as
\begin{equation}
\| \rho \|_1 = \sup_{\hat{o} : \| \hat{o} \| = 1} |\langle \hat{o} \rangle_\rho|,
\end{equation}
it follows that there exists an operator $\hat{o}_A$ supported on $A$, with $\| \hat{o}_A \| = 1$, such that $|\langle X^{-k} \hat{o}_A X^k - \hat{o}_A\rangle_{\rho(t_1)}| \geq 3\gamma/4$. Now, since $D$ is quasi-local, it must obey a Lieb-Robinson bound \cite{Lieb1972,Nachtergaele2006}, which implies that there exists a local operator $\hat{O}_{A^{\prime}}$ supported on a finite region $A^{\prime}$ such that $\| \hat{o}(t_1) - \hat{O}_{A^{\prime}}\| \leq \gamma/8$, where $\hat{o}(t_1) = e^{iDt_1} \hat{o} e^{-iDt_1}$.
Hence we see that
\begin{align}
& |\langle X^{-k} \hat{O}_{A'} X^{k} - \hat{O}_{A'} \rangle_{\rho}| \\
&\geq -\gamma/4 + | \langle X^{-k} \hat{o}_A(t_1) X^k - \hat{o}_A(t_1) \rangle_{\rho} | \\
&= -\gamma/4 + | \langle X^{-k} \hat{o}_A X^k - \hat{o}_A \rangle_{\rho(t_1)} | \label{foogaz} \\
&\geq -\gamma/4 + 3\gamma/4. \\
&= \gamma/2.
\end{align}
To get to line \eqnref{foogaz}, we used the fact that $X$ and $D$ commute. The lemma follows.
\end{proof}

Now consider a system which in isolation would evolve under a time-dependent Hamiltonian $H(t)$, which is periodic with period $T$. We assume that $H(t)$ exhibits the pre-thermalization phenomena discussed in the main text. That is, we assume that the Floquet operator can be approximated according to $\Uf \approx \widetilde{\Uf} = X e^{-iDT}$, where $D$ is quasi-local and commutes with $X$, and where $\Uf$ is close to $\widetilde{\Uf}$ in the sense that
\begin{equation}
\| \Uf^{\dagger} O_{A'} \Uf - \widetilde{\Uf}^{\dagger} O_{A'} \widetilde{\Uf} \| \leq \frac{\gamma}{8} \|O_{A'} \|
\end{equation}
for any operator $O_{A'}$ supported on $A'$. 

Let $\rho_{\mathrm{open}}(t)$ be the reduced state of the system (tracing out
the bath) at time $t$, taking into account the system-bath coupling, and we
assume that $\rho_{\mathrm{open}}(0) \equiv \rho$ satisfies Assumption
\ref{assumption_thermalizability} above. We assume the coupling to the bath is sufficiently weak, in the following sense:
\begin{assumption}[Weak coupling]
\label{assumption_weak_coupling}
For any time $0 \leq t \leq T$, we have
$\| \rho_{\mathrm{open}}^{\mathrm{int}}(t) - \rho \|_{A'} \leq \gamma/8$.
\end{assumption}
Here we defined the interaction picture state $\rho^{\mathrm{int}}_{\mathrm{open}}(t) = U(0,t)^{-1} \rho_{\mathrm{open}}(t) U(0,t)$, where $U(0,t)$ is the time evolution generated by $H(t)$. If we were to set the coupling to the bath to zero then the state $\mathrm{\rho}_{\mathrm{open}}^{\mathrm{int}}(t)$ would be constant in time, so Assumption \ref{assumption_weak_coupling} corresponds to weak coupling. Finally, we will assume that the strength of $DT$ is small enough so that
\begin{assumption}
For any observable $O_{A'}$ supported on $A'$, we have
\begin{equation}
\| e^{-iDT} O_{A'} e^{iDT} - O_{A'} \| \leq \frac{\gamma}{8}\|O_{A'}\|
\end{equation}
\end{assumption}
This will always be true in the regime of interest, $\lambda T \ll 1$ (where
$\lambda$ is as defined in Section \ref{prethermalization_results}), because
$\|D\|_{n_*}$ is $O(\lambda)$ [see \eqnref{D_first_order} in Theorem
\ref{mainthm}].
 
From the above assumptions we can now derive our main result:
 \begin{thm}
\begin{equation}
\| \rho_{\mathrm{open}}(T) - \rho\|_{A'} \geq \gamma/8.
\end{equation}
\begin{proof}
 \begin{align}
& \| \rho_{\mathrm{open}}(T) - \rho \|_{A'} \\
&= \| \Uf \rho^{\mathrm{int}}_{\mathrm{open}}(T) \Uf^{\dagger} - \rho \|_{A'} \\
&\geq -\gamma/8 + \| \WUf \rho^{\mathrm{int}}_{\mathrm{open}}(T) \WUf^{\dagger} - \rho \|_{A'} 
\\
&= -\gamma/8 + \| e^{-iDT} \rho^{\mathrm{int}}_{\mathrm{open}}(T) e^{iDT} - X^{\dagger} \rho X \|_{A'} \\
&\geq -\gamma/8 - \gamma/8 + \| \rho^{\mathrm{int}}_{\mathrm{open}}(T) - X^{\dagger} \rho X \|_{A'} \\
&\geq -\gamma/8 - \gamma/8 - \gamma/8 + \| \rho - X^{\dagger} \rho X \|_{A'} \\
&\geq -\gamma/8 - \gamma/8 - \gamma/8 + \gamma/2. \\
&= \gamma/8.
\end{align}
\end{proof}
\end{thm}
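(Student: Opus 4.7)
The plan is to bound $\|\rho_{\mathrm{open}}(T) - \rho\|_{A'}$ from below by a sequence of triangle-inequality substitutions, each of which replaces one object by a nearby approximation at the cost of at most $\gamma/8$, until what remains is exactly the quantity $\|\rho - X\rho X^\dagger\|_{A'}$ controlled by the preceding lemma (with $k=1$, and $X^{-1}=X$ since $N=2$). Concretely, I would write
\begin{equation}
\rho_{\mathrm{open}}(T) = \Uf\, \rho^{\mathrm{int}}_{\mathrm{open}}(T)\, \Uf^{\dagger},
\end{equation}
which is just the definition of the interaction picture and is exact. Then I peel away the small errors one at a time.

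The first step is to exchange $\Uf$ for the approximate Floquet operator $\widetilde{\Uf} = X e^{-iDT}$. Since the local norm $\|\cdot\|_{A'}$ is dual to the operator norm on observables supported on $A'$, the hypothesis that $\|\Uf^\dagger O_{A'} \Uf - \widetilde{\Uf}^\dagger O_{A'} \widetilde{\Uf}\| \leq (\gamma/8)\|O_{A'}\|$ translates directly into $\|\Uf \sigma \Uf^\dagger - \widetilde{\Uf}\sigma\widetilde{\Uf}^\dagger\|_{A'} \leq \gamma/8$ for any state $\sigma$. The second step is to replace $\rho^{\mathrm{int}}_{\mathrm{open}}(T)$ by $\rho$, which costs at most $\gamma/8$ by the weak-coupling assumption (using that conjugation by the unitary $\widetilde{\Uf}$ merely permutes the observable inside the local norm, dualizing to the operator supported on a region whose image under $\widetilde{\Uf}$ is localized near $A'$; here the $X$ part is strictly on-site and the $e^{-iDT}$ part is near-identity). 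The third step uses $\|e^{-iDT} O_{A'} e^{iDT} - O_{A'}\| \leq (\gamma/8)\|O_{A'}\|$ to strip the $e^{\pm iDT}$ factors from $\widetilde{\Uf}$, reducing the surviving unitary action to bare conjugation by $X$.

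After these three substitutions, the original quantity is bounded below by
\begin{equation}
\|X\rho X^\dagger - \rho\|_{A'} - 3\gamma/8,
\end{equation}
and the lemma of the previous subsection, combined with $\rho$ satisfying the thermalizability assumption at energy density below $e_c$, guarantees $\|\rho - X\rho X^\dagger\|_{A'} \geq \gamma/2$. Subtracting gives the claimed $\gamma/8$.

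The only real subtlety, and the step I would be most careful about, is keeping track of \emph{where} each assumption is being applied: the hypotheses are phrased in terms of the operator norm acting on observables localized on $A'$, while the conclusion is phrased in terms of the trace norm of reduced states on $A'$. Passing between these via the duality $\|\sigma\|_{A'} = \sup_{\|O_{A'}\|=1}|\mathrm{Tr}(O_{A'}\sigma)|$ is standard, but one must verify that the sequence of substitutions preserves the requirement that the test observable remain strictly supported on $A'$ (or is well-approximated by one that is). The $X$ conjugation is harmless since $X$ is a product of on-site unitaries; the $e^{\pm iDT}$ conjugations are absorbed by the third hypothesis above; and the approximation of $\Uf$ by $\widetilde{\Uf}$ is by assumption already phrased in the correct dual form. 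Modulo this bookkeeping the proof is a direct four-line chain of triangle inequalities.
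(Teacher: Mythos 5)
Your proposal is correct and follows essentially the same route as the paper: the identical chain of three triangle-inequality substitutions (each costing $\gamma/8$, via the $\Uf$-vs-$\WUf$ hypothesis, the weak-coupling assumption, and the near-identity of $e^{-iDT}$), the on-site nature of $X$ to keep the test observable supported on $A'$, and the lemma's bound $\|\rho - X\rho X^{\dagger}\|_{A'}\geq\gamma/2$ to close the argument. The only cosmetic difference is the order in which you peel off $\rho^{\mathrm{int}}_{\mathrm{open}}(T)\to\rho$ versus the $e^{\pm iDT}$ factors, which does not affect the accounting.
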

In other words, the state of the open system at times $t=T$ and $t=0$ are locally distinguishable. That is, for the stated assumptions, the state of the system does not synchronize with the drive and time translation symmetry is spontaneously broken.

\end{document}